\newtheorem{definition}{Definition}
\newtheorem*{theorem*}{Theorem}  
\begin{document}

\title{Quantum Approximate Counting with Additive Error: Hardness and Optimality}

\author{Mason L. Rhodes}
\affiliation{Center for Computing Research, Sandia National Laboratories, Albuquerque, NM, 87185, USA}
\affiliation{Center for Quantum Information and Control, University of New Mexico, Albuquerque, NM, 87131, USA}
\affiliation{Department of Physics and Astronomy, University of New Mexico, Albuquerque, NM, 87131, USA}
\author{Sam Slezak}
\affiliation{Information Sciences, Los Alamos National Laboratory, Los Alamos, NM,87544, USA}

\author{Anirban Chowdhury}
\affiliation{IBM Quantum, IBM T.\ J.\ Watson Research Center, Yorktown Heights, NY, 10598, USA}
\author{Yi\u{g}it Suba\c{s}\i}
\affiliation{Information Sciences, Los Alamos National Laboratory, Los Alamos, NM,87544, USA}

\maketitle


\begin{abstract}

Quantum counting is the task of determining the dimension of the subspace of states that are accepted by a quantum verifier circuit. It is the quantum analog  of counting the number of valid solutions to {\sf NP} problems - a problem well-studied in theoretical computer science with far-reaching implications in computational complexity. 
The complexity of solving the class {\sf \#BQP} of quantum counting problems, either exactly or within suitable approximations, is related to the hardness of computing many-body physics quantities arising in algebraic combinatorics. Here, we address the complexity of quantum approximate counting under additive error. 
First, we show that computing additive approximations to {\sf \#BQP} problems to within an error exponential in the number of witness qubits in the corresponding verifier circuit is as powerful as polynomial-time quantum computation. Next, we show that returning an estimate within error that is any smaller is {\sf \#BQP}-hard. Finally, we show that additive approximations to a restricted class of {\sf \#BQP} problems are equivalent in computational hardness to the class {\sf DQC}$_1$. Our work parallels results on additively approximating {\sf \#P} and {\sf GapP} functions.
\end{abstract}


\section{Introduction}\label{sec:introduction}

Given a decision problem with polynomial-time verifiable solutions, a natural question to ask is how many solutions, i.e., accepting witnesses, exist. When the verifier is classical, this question essentially defines the counting complexity class {\sf \#P} \cite{VALIANT1979189}---the class of functions that can be expressed as the number of accepting witnesses to {\sf NP} problems. These counting problems appear in combinatorics, graph theory, and statistical physics but are known to be extremely hard to solve. In fact, a result of Toda \cite{Toda1991PPIA} implies that a polynomial number of queries to a {\sf\#P} oracle can solve any problem in the entire polynomial hierarchy. 

The class {\sf\#P} precisely characterizes the complexity of counting problems such as computing the permanent of binary matrices, counting the number of perfect matchings in a graph \cite{VALIANT1979189}, and computing partition functions of certain classical spin systems \cite{jerrum1993ising,bulatov2005partition}. 
On the other hand, a number of important problems in, e.g., algebraic combinatorics \cite{buergisser2008complexity} and quantum computing \cite{Brown_2011} are polynomial-time reducible to {\sf \#P} but are not {\sf \#P} problems. Another important counting class is {\sf GapP} \cite{fortnow1998complexitylimitationsquantumcomputation}, defined as the closure of {\sf \#P} under subtraction, which is known to contain problems such as computing multiplicities arising in the representation theory of the symmetric group \cite{buergisser2008complexity}.

The counting class {\sf \#BQP} was defined to formalize the notion of counting the number of quantum witnesses to promise problems that allow efficient verification on quantum computers \cite{Brown_2011, shi2009note}. This class contains quantum-inspired counting problems such as computing the density of states of quantum Hamiltonians \cite{Brown_2011} and computing representation theoretic multiplicities such as Kronecker coefficients \cite{ikenmeyer2023quantumkronecker,bravyi2024kronecker}.
Though {\sf \#P}, {\sf GapP}, and {\sf \#BQP} are equivalent under polynomial-time reductions \cite{Brown_2011,counting-O}, the subtle difference between them manifests directly in the hardness of approximation, specifically in approximating the respective quantities within a multiplicative error. 
Any {\sf \#P} function can be multiplicatively approximated in the third level of the polynomial hierarchy \cite{Stockmeyer1983TheCO}, a class of problems believed to be easier than ${\sf \# P}$, but approximating {\sf GapP} functions can be {\sf \#P}-hard in general \cite{goldberg2017complexityapproximatingcomplexvaluedising}. However, it is unknown if multiplicatively approximating {\sf \#BQP} relations is any easier even though this has been used to define a new class {\sf QXC} of quantum approximate counting problems \cite{bravyi2024kronecker,Bravyi_2022}.

It is thus interesting to ask if other notions of approximation also lead to problems of varying complexity. The study of \textit{additive} approximations to counting problems was initiated in Ref.~\cite{bordewich2009approximatecountingquantumcomputation}, motivated by the result that certain additive approximations to {\sf GapP} problems suffice to solve {\sf BQP} decision problems \cite{fortnow1998complexitylimitationsquantumcomputation}. There, it was shown that all {\sf \#P} and {\sf GapP} functions have efficient \textit{classical} additive-approximation algorithms up to a normalization factor exponential in the size of the descriptions of the solutions, with no difference in the complexity of obtaining the additive approximations for {\sf \#P} and {\sf GapP}. In quantum computation, additive-error approximations to some {\sf \#BQP} and {\sf QXC} problems including estimating the quantum density of states and quantum partition functions have been shown to be complete for the class {\sf DQC$_1$} \cite{brandao2008entanglement,Chowdhury2021PartitionDQC1,cade2017quantum}, the set of decision problems that can be solved in a model of quantum computation where the system is initialized with only one pure qubit \cite{knill1998power,shepherd2006computation}.

In this work, we study the complexity of obtaining additive-error approximations to {\sf \#BQP} problems and its connection to the complexity classes {\sf BQP} and {\sf DQC$_1$}. 
We first prove that returning additive-error approximations to {\sf \#BQP} functions, normalized by the dimension of the witness subspace, is polynomial time equivalent to {\sf BQP}, and thus captures the difficulty of polynomial-time quantum computation. Next, we show that this level of approximation is likely optimal as the problem of computing more accurate approximations becomes {\sf \#BQP}-hard. We then demonstrate that a weaker additive approximation is achievable via a polynomial-time classical algorithm. Next we present results that elucidate the connection between {\sf DQC$_1$} and {\sf \#BQP}. We show that {\sf DQC$_1$} is equivalent to computing additive approximations to a subset of {\sf\#BQP} problems---those with verifier circuits containing at most a logarithmic number of ancilla qubits.

\section{Background}\label{sec:background}

\subsection{Classical Counting Complexity}\label{sec:classical-counting}

Classical counting complexity, including classes such as {\sf \#P} and {\sf GapP}, characterizes the difficulty of computing the number of accepting witnesses to {\sf NP} problems. A problem is in {\sf NP} if given a ``yes'' instance $x$ there is an efficiently generated polynomial-time verification procedure and at least one witness $y$ such that the procedure accepts on $(x,y)$, while for ``no'' instances, no such witness exists. 

We choose to define both {\sf NP} and {\sf \#P} via relations $R\subset \{0,1\}^*\times\{0,1\}^* $ between the set of inputs and the set of possible witnesses such that there exists a polynomial-time Turing machine that decides the proposition $R(x,y)$. Introducing the set of accepting witnesses of length bounded by a polynomial $p$ for input $x$: 
\begin{align}\label{eq:R-set-classical-accepting-witnesses}
    R(x) = \{y:R(x,y),|y|\leq p(|x|)\}\,,
\end{align}
solving the {\sf NP} decision problem is to decide whether $|R(x)|>0$ while solving the {\sf\#P} problem is to return $|R(x)|$. The class {\sf GapP} is then defined as the set of functions that are the difference of two {\sf \#P} functions.

\subsection{Acceptance Operators and {\sf QMA}}\label{sec:accept-op-qma}
A quantum generalization of {\sf NP} is the class {\sf QMA}, which roughly corresponds to {\sf NP} except the witnesses are quantum states and the verification procedure is a quantum circuit. We will define {\sf QMA}, and later the quantum analogue of {\sf \#P}, known as {\sf \#BQP}, via properties of \textit{acceptance operators} $\mathcal V$ \cite{Bravyi_2022,Brown_2011} which generalize the relations $R$ in the previous section. We define them via a quantum circuit $V$ that acts on a Hilbert space $\mathcal H_I\otimes\mathcal H_A\otimes\mathcal H_W$ where $\mathcal H_W$ is the subspace spanned by the quantum witnesses, $\mathcal H_A$ is the ancilla subspace needed for the verification procedure, and $\mathcal H_I$ is the subspace of the (classical) input. Given the circuit $V$ requiring $a$ ancilla qubits and a classical input bit string $x$, we define the acceptance operator $\mathcal V_x$ via
\begin{align}\label{eq:gen-acceptance-op}
    \mathcal V_x = (\bra{x}\otimes\langle 0^a|\otimes\mathbb{I}_W)V^\dagger \Pi_{\text{out}}V(\ket{x}\otimes |0^a\rangle\otimes\mathbb{I}_W)\,,
\end{align}
where $\Pi_\text{out}$ is a projector onto the $\ketbra{1}{1}$ state of the first qubit. Using these operators, we can define the complexity class {\sf QMA}.
\begin{definition}[{\sf QMA}]\label{def:QMA}
    Let $A = (A_{\text{YES}},A_{\text{NO}})$ be a promise problem and let $c,s: \mathbb N \to [0,1]$ be functions satisfying $c(n)-s(n) = \Omega(\text{poly}(n)^{-1})$. Then $A\in {\sf QMA}$ if and only if there exists a polynomial-time uniform family of quantum circuits $\{{V^{(n)}}:n\in \mathbb N\}$ acting on $w(n) = O(\text{poly}(n))$ witness qubits, $a(n) = O(\text{poly}(n))$ ancilla qubits, and $n$ input qubits such that:
    \begin{enumerate}
        \item if $x\in A_{\text{YES}}$, then $||\mathcal V_x||\geq c(|x|)$\,,
        \item if $x\in A_{\text{NO}}$, then $||\mathcal V_x||\leq s(|x|)$\,,
    \end{enumerate}
    where $\mathcal V_x$ is the acceptance operator of Eq.~\eqref{eq:gen-acceptance-op} corresponding to $V^{(|x|)}$ on input $x$.  
\end{definition}
Often in the literature the functions $c$ and $s$ are taken to be the constants $2/3$ and $1/3$, respectively. We choose this definition as it motivates the discussion of the promise gap $c(n)-s(n)$, which becomes important when defining {\sf \#BQP} in Section \ref{sec:sharp-BQP} below.

\subsection{{\sf \#BQP}}\label{sec:sharp-BQP}

Some care needs to be taken in defining the counting class associated to {\sf QMA}, known as {\sf \#BQP}. Intuitively, one would like to generalize the notion of computing the size of the set $R(x)$ in Eq.~\eqref{eq:R-set-classical-accepting-witnesses} to the quantum case. This is not straightforward, as the number of possible quantum witnesses is uncountably infinite. Instead, a counting problem can be formulated as determining the dimension of the subspace of good witnesses. 
This subspace will be defined via the spectral properties of the acceptance operator $\mathcal V_x$. 
We first define the maps $N_{\geq a} $ from Hermitian operators to integers via
\begin{align}
    N_{\geq a}(H)  = |\{b:b\in\text{spec}(H),b\geq a\}|\,.
\end{align}
That is, they return the number of eigenvalues of $H$ that are greater than or equal to $a$. One can then attempt to define {\sf\#BQP} via counting functions such that given a circuit $V$ and an input $x$ corresponding to an acceptance operator $\mathcal V_x$, return $N_{\geq c}(\mathcal V_x)$. This ends up being too powerful, as the ability to return this would allow one to solve problems in the variant of {\sf QMA} where the promise gap, $c-s$, is allowed to be exponentially small. This instead characterizes the class {\sf PSPACE} \cite{fefferman2016quantummerlinarthurexponentially}. Thus, one must be careful with how the promise gap is handled.   

The original formulation of ${\sf \#BQP}$ \cite{Brown_2011,shi2009note} circumvents this by imposing a promise that acceptance operators $\mathcal V_x$ corresponding to allowed problem instances $x$ do not have any eigenvalues in the interval $(c,s)$. Conveniently, this also guarantees that the output of any ${\sf \#BQP}$ function is unique but has the consequence that proving inclusion of problems in ${\sf \#BQP}$ requires adding extra promises. In particular, Ref.~\cite{Brown_2011} showed that the density of states is a ${\sf \#BQP}$-complete problem but only for Hamiltonians that satisfy a potentially nonphysical promise on their spectrum. 
As Ref.~\cite{Brown_2011} notes, there is an alternative approach to defining ${\sf \# BQP}$ where one is allowed to ``miscount'' eigenvalues in the interval $(c,s)$. This was later generalized to define the notion of relative-error approximations to ${\sf \# BQP}$ problems and to show that this is equivalent to problems such as computing relative-error approximations to quantum partition functions under polynomial-time reductions~\cite{Bravyi_2022}. Here, we choose to work with the definition of ${\sf \#BQP}$ which allows for miscounting, as it does not require additional assumptions on the associated verifier circuits. 

In order to rigorously define {\sf\#BQP} problems with the allowed miscounting in the promise gap, we again define them as relations $r\subset J\times \mathbb Z^+$ between a set of inputs $J\subset\{0,1\}^*$, and the set of positive integers $\mathbb Z^+$. Note that the use of $r$ is to distinguish them from the relations $R$ used in the definition of {\sf \#P}. The relations considered here include standard functions as a specific instance, satisfying the property that for every $x
\in J$ there is exactly one $y\in\mathbb Z^+$ related to it.  We denote the set of outputs related to the input $x\in J$ via
\begin{align}
    r(x) \equiv \{y: r(x,y) \}\,.
\end{align}
A \textit{relational problem} is then a problem where given an input $x\in J$, the goal is to return any $y\in r(x)$.
With this in mind we can now define {\sf \#BQP}. 
\begin{definition}[{\sf \#BQP}]\label{def:sharp-BQP}
    A relation $r\subset J \times \mathbb Z^+$ is in {\sf\#BQP} if and only if there exists a uniformly-generated family of quantum circuits $V^{r,(n)}$ and functions $c^r,s^r: \mathbb N \to [0,1]$ satisfying $c^r(n)-s^r(n) = \Omega(\text{poly}(n)^{-1})$ such that 
    $$
       r(x) = [N_{\geq c^r(|x|)}(\mathcal V_x^{r})..N_{\geq s^r(|x|)}(\mathcal V_x^{r})]\,, 
    $$
    where $\mathcal V_x^{r}$ is the acceptance operator corresponding to the circuit $V^{r,(|x|)}$ on input $x$.
\end{definition}
Here the notation $[a..b]$ denotes the set of integers between and including $a$ and $b$. For the purposes of this work, certain parameters of the circuits $V^{r,(n)}$ will be very important. 
These parameters are defined as follows: Given a family of verifier circuits $\{V^{r,(n)}\}_{n\in \mathbb N}$ in the $\{\text{Toffoli}, H, S\}$ gate set where $r$ denotes the family and $n$ denotes the input size, define the parameters $w_r,a_r,t_r,h_r:\mathbb Z^+ \to \mathbb Z^+$ via 
\begin{itemize}
    \item $w_r(n)$ as the number of qubits in the witness register,
    \item $a_r(n)$ as the number of qubits in the ancilla register,
    \item $t_r(n)$ as the total number of gates,
    \item $h_r(n)$ as the total number of $H$ gates. 
\end{itemize}

\subsection{{\sf BQP} and {\sf DQC$_1$}}\label{sec:BQP-DQC1}

We now recall the standard definition of {\sf BQP}, which is the class of decision problems that can be solved by polynomial-time quantum computation.

\begin{definition}[{\sf BQP}]
    Let $A = (A_{\text{YES}},A_{\text{NO}})$ be a promise problem and let $c,s: \mathbb N \to [0,1]$ be functions satisfying $c(n)-s(n) = \Omega(\text{poly}(n)^{-1})$. Then $A\in{\sf BQP}$ if and only if there exists a polynomial-time uniform family of quantum circuits $\{V^{(n)}:n\in \mathbb N\}$ acting on $a(n) = O(\text{poly}(n))$ ancilla qubits and $n$ input qubits such that:
    \begin{enumerate}
        \item if $x\in A_{\text{YES}}$, then $\bra{x, 0^a}V^{(|x|)\dagger}\Pi_{\textup{out}}V^{(|x|)}\ket{x, 0^a} \geq c(|x|)$\,,
        \item if $x\in A_{\text{NO}}$, then $\bra{x, 0^a}V^{(|x|)\dagger}\Pi_{\textup{out}}V^{(|x|)}\ket{x, 0^a} \leq s(|x|)$\,.
    \end{enumerate}
\end{definition}

Another computational complexity class that we will be working with is {\sf DQC$_1$}. Introduced in Ref.~\cite{knill1998power}, {\sf DQC}$_1$ is a limited quantum computational model in which only a single qubit can be prepared in a pure state while the rest of the qubits used in the computation are in the maximally mixed state. It is known that {\sf DQC$_1$} is equal to {\sf DQC$_{\log}$}~\cite{shepherd2006computation,shor2008estimating}, which allows for logarithmically many ancilla qubits to be prepared in a pure state. We use this fact to define {\sf DQC$_1$} in terms of acceptance operators $\mathcal D_x$ similar to how {\sf QMA} and {\sf BQP} were defined. Note that for {\sf DQC$_1$}, rather than employing polynomial-time uniform generation for the circuits as before, an alternative scheme must be used where a description of a circuit for each input $x$ is generated in polynomial time. (This is simply to avoid having to count the input $x$ among the qubits prepared in a pure state.) Thus given a circuit $D^{(x)}$ on $a$ ancilla qubits and $w$ witness qubits, define the corresponding acceptance operator as
\begin{align}\label{eq:dqc1-acceptance-op}
    \mathcal D_x = (\langle 0^a|\otimes \mathbb I_W)D^{(x)\dagger} \Pi_{\text{out}}D^{(x)}(|0^a\rangle\otimes \mathbb I_W)\,. 
\end{align}
With this clarified, we are now in a position to define {\sf DQC$_1$}.
\begin{definition}[{\sf DQC}$_1$]\label{def:dqc1}
    Let $A = (A_{\text{YES}},A_{\text{NO}})$ be a promise problem and let c,s: $\mathbb N\to [0,1]$ be functions satisfying $c(n)-s(n) =  \Omega(\text{poly}(n)^{-1})$. Then $A\in {\sf DQC}_1$ if and only if for every input $x\in A$ there exists a polynomial-time-generated quantum circuit $D^{(x)}$ with corresponding acceptance operator $\mathcal D_x$ as in Eq.~\eqref{eq:dqc1-acceptance-op} acting on  $w(|x|) = O(\text{poly}(|x|))$ witness qubits and $a(|x|) = O(log(|x|))$ ancilla qubits such that:
\begin{enumerate}
    \item If $x\in A_{\text{YES}}$, then $\frac{\textup{Tr}[\mathcal D_x]}{2^{w(|x|)}}\geq c(|x|)$\,,
    
    \item If $x\in A_{\text{NO}}$, then $\frac{\textup{Tr}[\mathcal D_x]}{2^{w(|x|)}}\leq s(|x|)$\,.
\end{enumerate}
\end{definition}

\subsection{Additive-Error Approximations}\label{sec:add-error-approx}
In Ref.~\cite{bordewich2009approximatecountingquantumcomputation} it was shown that
all {\sf\#P}
and {\sf GapP} functions can be additively approximated to accuracies related to the length of the witnesses of the associated {\sf NP} problems defining them. In particular, given a {\sf \#P} or {\sf GapP} function $f$ with witnesses on $w(|x|)$ bits for inputs of size $|x|$, the authors show that there exists a classical probabilistic algorithm on input $x$ that outputs an estimate $\tilde f(x)$ satisfying
\begin{align}
    |f(x)-\tilde f(x)|\leq \epsilon u(|x|)\,,
\end{align}
to a normalization $u(|x|) = 2^{w(|x|)}$, with probability $1-\delta$, and that runs in time polynomial in $1/\epsilon$ and $\log(1/\delta)$. The fact that this level of additive approximation yields an equivalent complexity for both {\sf \#P} and {\sf GapP} follows immediately from another result of the paper, namely that additive approximations with a normalization $u$ are closed under subtraction. 

In this work we study whether the complexity of analogous additive approximations to {\sf\#BQP} relations are of a different complexity. In order to do so, we adapt the definitions of additive approximations from
Ref.~\cite{bordewich2009approximatecountingquantumcomputation} to relational problems, as well as framing them as computational problems rather than algorithms.
\begin{definition}[Additive Error Approximation]\label{def:additive-error-approximations}
    Given a relation $r\subset J \times \mathbb Z^+$ and a normalization $u:\mathbb Z^+ \to \mathbb R^+$, the problem of returning an additive-error approximation to $r$ with normalization $u$ to precision $\epsilon$, is to given an input $x\in J$, return a number $\tilde r(x,u)$ satisfying
    $$
    \min\{|\tilde r(x,u) - y|:y\in r(x,\cdot)\}\leq \epsilon u(|x|)\,.
    $$
    where $\epsilon=\Omega(\textup{poly}(|x|^{-1}))$.
\end{definition}
Intuitively this just means that the estimate needs to be close to at least one of the outputs $y$ related to an input $x$.

\section{Summary of Results}\label{sec:summary-of-results}

\subsection{The Complexity of Additive-Error Approximations to {\sf \#BQP}}\label{sec:complexity-add-error-apprx-sharp-bqp}
Here we present results related to additively approximating {\sf \#BQP} relations. The proofs for all of the theorems in this section are contained in Section \ref{sec:proofs-add-error-apprx-to-sharp-bqp}. For our first result, we demonstrate that returning additive approximations to {\sf \#BQP} functions with normalization factors $u$ scaling exponentially in the number of qubits in the witness space is polynomial-time equivalent to {\sf BQP}.

\begin{restatable}{thm}{BPPAisBQP}\label{thm:BPPA-BQP}
 Let $A$ be an oracle that given as input a {\sf \#BQP} relation $r$ with verifier circuits $V^{r,(n)}$, an input string $x$, and some $\epsilon=\Omega(\text{poly}(|x|)^{-1})>0$, returns an additive-error approximation to $r(x)$ with $u(|x|) =2^{w_r(|x|)}$. Then {\sf BQP} $=$ {\sf BPP}$^A$.
\end{restatable}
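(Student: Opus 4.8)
The statement is a two-sided containment, so I would prove ${\sf BQP}\subseteq{\sf BPP}^A$ and ${\sf BPP}^A\subseteq{\sf BQP}$ separately; essentially all the work is in the second. For ${\sf BQP}\subseteq{\sf BPP}^A$, a short reduction suffices. Given $B=(B_{\text{YES}},B_{\text{NO}})\in{\sf BQP}$ with verifier family $\{V^{(n)}\}$ and thresholds $c,s$, form the relation $r$ on domain $J=B_{\text{YES}}\cup B_{\text{NO}}$ whose verifier $V^{r,(n)}$ is $V^{(n)}$ together with one extra witness qubit that is never acted on, and set $c^r=c$, $s^r=s$. Since the witness qubit is idle, $\mathcal V_x^r=p(x)\,\mathbb I_W$, where $p(x)=\bra{x,0^a}V^{(|x|)\dagger}\Pi_{\text{out}}V^{(|x|)}\ket{x,0^a}$ is the {\sf BQP} acceptance probability and $\mathbb I_W$ is the $2\times2$ identity. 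The {\sf BQP} promise forces the whole spectrum of $\mathcal V_x^r$ into $[c,1]$ or into $[0,s]$, so $N_{\ge c^r}(\mathcal V_x^r)=N_{\ge s^r}(\mathcal V_x^r)$, equal to $2$ on yes-instances and $0$ on no-instances; this also certifies that $r$ is a legal single-valued {\sf \#BQP} relation with gap $c-s=\Omega(\text{poly}(n)^{-1})$. A {\sf BPP}$^A$ (indeed {\sf P}$^A$) machine then decides $B$: it writes down $V^{r,(|x|)}$ in polynomial time, queries $A$ on $(r,x,\tfrac13)$, obtains $\tilde r$ with $|\tilde r-N_{\ge c^r}(\mathcal V_x^r)|\le\tfrac13\cdot 2^{w_r(|x|)}=\tfrac23$, and accepts iff $\tilde r>1$.

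For ${\sf BPP}^A\subseteq{\sf BQP}$: since ${\sf BPP}\subseteq{\sf BQP}$ and an adaptive computation with polynomially many oracle calls is handled by amplifying each answer and a union bound, it suffices to exhibit a {\sf BQP} procedure that, given a description of a {\sf \#BQP} verifier $V^{r,(n)}$ with thresholds $c^r(n),s^r(n)$, a string $x$ with $n=|x|$, and $\epsilon=\Omega(\text{poly}(n)^{-1})$, outputs with probability $1-\text{poly}(n)^{-1}$ a number within $\epsilon\,2^{w_r(n)}$ of some element of $r(x)=[N_{\ge c^r(n)}(\mathcal V_x^r)..N_{\ge s^r(n)}(\mathcal V_x^r)]$. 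I would proceed as follows. (i) From $V^{r,(n)}$ build an efficient block encoding of the Hermitian operator $\mathcal V_x^r$ on $w_r(n)$ qubits (spectrum in $[0,1]$), by hardwiring $x$ and $\ket{0^{a_r(n)}}$ and using the linear-combination-of-unitaries / Hadamard-test form of $\Pi_{\text{out}}$; this is the block-encoding construction already underlying the {\sf DQC}$_1$-completeness results for density-of-states quantities. (ii) Run eigenvalue (phase) estimation built from this block encoding --- e.g. phase estimation on its qubitized walk operator --- with $O(\log(1/\Delta)+\log(1/\eta))=O(\log n)$ bits of precision, where $\Delta=c^r(n)-s^r(n)=\Omega(\text{poly}(n)^{-1})$, applied to the witness register initialized in the maximally mixed state (purified against a reference register); for a maximally mixed input this returns an estimate $\hat\lambda$ of a uniformly random eigenvalue $\lambda_i$ of $\mathcal V_x^r$ (counted with multiplicity), accurate to within $\Delta/4$ except with probability $\le\eta$. (iii) With $\theta=\tfrac12(c^r(n)+s^r(n))$, estimate $p_E:=\Pr[\hat\lambda\ge\theta]$ to additive error $\delta''$ by $O(\delta''^{-2}\log(1/\text{fail}))$ independent repetitions and output $\tilde r=2^{w_r(n)}\hat p_E$ (rounded to the nearest integer when $2^{w_r(n)}$ is small). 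All parameters are polynomial in $n$, so this is a {\sf BQP} computation. (An equivalent route applies quantum singular value transformation to block-encode $P(\mathcal V_x^r)$ for a polynomial $P$ of degree $O(\Delta^{-1}\log(1/\delta'))$ approximating the step function with jump in $(s^r,c^r)$, then estimates the normalized trace $2^{-w_r(n)}\,\mathrm{Tr}\,P(\mathcal V_x^r)$ by one-clean-qubit trace estimation.)

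For correctness of step (iii): the event $\hat\lambda\ge\theta$ has conditional probability $\ge1-\eta$ given an eigenvector with $\lambda_i\ge c^r$ and $\le\eta$ given one with $\lambda_i\le s^r$, and lies in $[0,1]$ otherwise, so averaging over the uniform choice of eigenvector yields $p_E\in\big[\,2^{-w_r(n)}N_{\ge c^r}(\mathcal V_x^r)-\eta,\;2^{-w_r(n)}N_{\ge s^r}(\mathcal V_x^r)+\eta\,\big]$; hence $\tilde r=2^{w_r(n)}\hat p_E$ lies within $(\eta+\delta'')2^{w_r(n)}+\tfrac12$ of the integer interval $r(x)$, the $\tfrac12$ being the cost of rounding to an integer endpoint. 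Taking $\eta=\delta''=\epsilon/4$ settles the case $\epsilon\,2^{w_r(n)}\ge1$; otherwise $2^{w_r(n)}\le1/\epsilon$ is polynomially bounded, so $w_r(n)=O(\log n)$ and instead taking $\eta=\delta''=\Theta(2^{-w_r(n)})=\Omega(\text{poly}(n)^{-1})$ forces $\tilde r$ to round to an exact element of $r(x)$. The conceptual heart of the proof is steps (i)--(iii): the observation that ``count the eigenvalues of $\mathcal V_x^r$ above a threshold, up to slack $\epsilon\,2^{w_r(n)}$'' is computed in {\sf BQP} by sampling a uniformly random eigenvalue (block encoding plus eigenvalue estimation on a maximally mixed witness) and estimating the probability it exceeds the threshold --- the resource cost staying polynomial precisely because the {\sf \#BQP} promise makes $c^r-s^r$ inverse-polynomial, which both sets the phase-estimation precision and cleanly separates the eigenvalues into ``above $c^r$'', ``between'', and ``below $s^r$''. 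The fiddly point is the error budget of step (iii): when the witness register is small the oracle's absolute tolerance $\epsilon\,2^{w_r(n)}$ can drop below $1$, so the answer must be pinned down exactly, forcing internal precision $\sim 2^{-w_r(n)}$ (still inverse-polynomial in that regime). The direction ${\sf BQP}\subseteq{\sf BPP}^A$ should present no difficulty beyond checking that the padded circuit is a valid {\sf \#BQP} instance.
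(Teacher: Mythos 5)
Your proposal is correct in substance, but it takes a genuinely different route from the paper in \emph{both} directions, and one step needs a small repair. For {\sf BQP} $\subseteq$ {\sf BPP}$^A$ the paper does not use your padding trick: it queries the oracle polynomially many times on padded inputs $x_i = y_i x$ encoding a grid of $M$ completeness/soundness pairs $(c_i,s_i)$ partitioning $[0,1]$, and combines the answers into an estimate of $\text{Tr}[\mathcal V_x]/2^{w}$ good to $\tfrac{5}{2}\,2^{w}/M$, thereby solving the Average Accept Probability of Verifier Circuits problem (shown {\sf BQP}-complete in an appendix); that machinery is what the paper later reuses for the {\sf DQC}$_1$ analogue. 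Your reduction---append one idle witness qubit so that $\mathcal V_x^r = p(x)\,\mathbb I_2$ and a single query with $\epsilon=\tfrac13$ decides---is considerably simpler and avoids the appendix apparatus, but your claim that the {\sf BQP} promise forces $N_{\geq c^r}(\mathcal V_x^r)=N_{\geq s^r}(\mathcal V_x^r)$ fails at the boundary: on a no-instance with $p(x)=s$ exactly one has $N_{\geq s}(\mathcal V_x^r)=2$, so $r(x)=[0..2]$ and a valid oracle may legitimately answer $2$, defeating the test ``accept iff $\tilde r>1$.'' The fix is one line: take thresholds strictly inside the gap, e.g.\ $c^r=(2c+s)/3$ and $s^r=(c+2s)/3$, which keeps the gap inverse-polynomial and makes $r(x)=\{2\}$ on yes- and $\{0\}$ on no-instances.

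For {\sf BPP}$^A\subseteq$ {\sf BQP} the paper is more elementary than you are: it applies witness-preserving (Marriott--Watrous) amplification to get $\tilde c\approx 1$, $\tilde s\approx 0$, then feeds $M=O(1/\epsilon^2)$ uniformly random computational basis states of the witness register through the amplified verifier, so that $2^{w_r}$ times the acceptance frequency estimates $\text{Tr}[\tilde{\mathcal V}_x^r]$, which is sandwiched between $N_{\geq c}-(1-\tilde c)2^{w_r}$ and $N_{\geq s}+\tilde s\,2^{w_r}$; no block encoding or phase estimation is needed. Your route---block-encode $\mathcal V_x^r$, run eigenvalue estimation (or a QSVT threshold projector) on a maximally mixed witness, and estimate the probability of exceeding the midpoint threshold---is also sound, and is essentially the argument the paper itself deploys for Theorem~4, where the logarithmic-ancilla restriction rules out the simpler amplification; your error budget (phase-estimation failure $\eta$, arbitrary classification of the mid-gap eigenvalues, sampling error $\delta''$, and the rounding treatment when $\epsilon\,2^{w_r}<1$) checks out. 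So each approach buys something: yours is lighter on the {\sf BQP}-containment side and heavier than necessary on the simulation side, while the paper's multi-threshold query scheme, though more laborious for {\sf BQP} alone, is the construction that transfers directly to the {\sf DQC}$_1$ setting.
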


The consequences of this theorem are two-fold. First, since the proof in Section~\ref{sec:proofs-add-error-apprx-to-sharp-bqp} shows that the oracle $A$ can be simulated by a quantum subroutine, in analogy to the results in Ref.~\cite{bordewich2009approximatecountingquantumcomputation}, 
there are efficient quantum algorithms for returning additive-approximations for all {\sf\#BQP} functions, but with the normalization $u$ now being determined by the number of qubits rather than the number of bits describing the witnesses. On the other hand, this result implies that returning additive approximations to {\sf\#BQP} functions to the stated $u$ is {\sf BQP}-hard, so it is unlikely that they can be efficiently computed classically. We next show that this level of additive approximation is, in a sense, optimal. 

\begin{restatable}{thm}{bestapprox}\label{thm:better-u-is-sharpBQP-hard}
    The problem of returning additive-error approximations for all {\sf \#BQP} relations $r$ with corresponding acceptance operators  $\mathcal V_x^{r}$ on input $x$ to normalization $u(|x|) =  2^{cw_r(|x|)}$ for positive $c<1$ satisfying $1-c = \Omega(\text{poly}(|x|
    )^{-1})$ is {\sf\#BQP}-hard. 
\end{restatable}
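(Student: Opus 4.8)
\section*{Proof strategy for Theorem~\ref{thm:better-u-is-sharpBQP-hard}}

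The plan is to give a polynomial-time reduction from an arbitrary {\sf \#BQP} relation $r$ to the approximation problem in the statement; the tool is \emph{witness padding}. Given the verifier family $\{V^{r,(n)}\}$ for $r$ on $w_r(n)$ witness qubits and $a_r(n)$ ancilla qubits, define a new family $\{V'^{(n)}\}$ that runs $V^{r,(n)}$ on the original input, ancilla, and witness registers and acts as the identity on $j(n)$ extra qubits \emph{appended to the witness register} $\mathcal H_W$, keeping $\Pi_{\text{out}}$ on one of the original qubits. Then $V'^{(n)}=V^{r,(n)}\otimes \mathbb I_{2^{j(n)}}$, so the acceptance operator factorizes as $\mathcal V'_x=\mathcal V_x^{r}\otimes \mathbb I_{2^{j(n)}}$ and hence $N_{\geq a}(\mathcal V'_x)=2^{j(n)}\,N_{\geq a}(\mathcal V_x^{r})$ for every threshold $a$: the eigenvalues of $\mathcal V_x^r$ are unchanged but every multiplicity is blown up by $2^{j(n)}$. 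Keeping $c^{r'}=c^r$ and $s^{r'}=s^r$, the new family has the same inverse-polynomial promise gap, is uniformly generated in the $\{\text{Toffoli},H,S\}$ gate set, and has $w_{r'}(n)=w_r(n)+j(n)$; as long as $j(n)=O(\text{poly}(n))$, this certifies $r'\in{\sf \#BQP}$ with $r'(x)=[\,2^{j(|x|)}N_{\geq c^r}(\mathcal V_x^r)\,..\,2^{j(|x|)}N_{\geq s^r}(\mathcal V_x^r)\,]$.

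With $r'$ in hand, fix $\epsilon$ to any value admitted by Definition~\ref{def:additive-error-approximations} (say $\epsilon=1/3$), call the oracle on $(r',x,\epsilon)$ to obtain $\tilde r'$ with $\min\{|\tilde r'-y'|:y'\in r'(x)\}\leq \epsilon\,2^{c w_{r'}(|x|)}=\epsilon\,2^{c w_r(|x|)}\,2^{c j(|x|)}$, and output $\hat y=\operatorname{round}(\tilde r'/2^{j(|x|)})$. Writing $r(x)=[A..B]$ with integers $A=N_{\geq c^r}(\mathcal V_x^r)$ and $B=N_{\geq s^r}(\mathcal V_x^r)$, the minimizing $y'$ lies in $[2^{j}A..2^{j}B]$, so $y'/2^{j}\in[A,B]$ and
\[
\big|\hat y - y'/2^{j}\big|\;\leq\;\tfrac12+\frac{|\tilde r'-y'|}{2^{j}}\;\leq\;\tfrac12+\epsilon\,2^{c w_r(|x|)}\,2^{-(1-c)j(|x|)}.
\]
Choosing $j(n)$ to be the smallest positive integer for which the last term is strictly below $\tfrac12$ gives $|\hat y-y'/2^{j}|<1$; since $\hat y$ is an integer and $A\leq y'/2^{j}\leq B$ with $A,B$ integers, this pins $\hat y$ to $[A..B]=r(x)$, so $\hat y$ is a legal answer to the relational problem for $r$.

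The step that needs care—and the point at which the hypothesis on $c$ does its work—is checking that $j(n)$ stays polynomial. The inequality $\epsilon\,2^{c w_r(n)}\,2^{-(1-c)j(n)}<\tfrac12$ rearranges to $j(n)>\big(c\,w_r(n)+\log_2(2\epsilon)\big)/(1-c)$, and since $w_r(n)=O(\text{poly}(n))$ while $1-c=\Omega(\text{poly}(n)^{-1})$ forces $1/(1-c)=O(\text{poly}(n))$, we get $j(n)=O(\text{poly}(n))$ and hence $w_{r'}(n)=O(\text{poly}(n))$, so the map $(r,x)\mapsto(r',x)$ together with the post-processing is polynomial-time. (Letting $\epsilon$ be as small as $\Omega(\text{poly}(n)^{-1})$ rather than constant changes nothing, since $\log_2(1/\epsilon)=O(\text{poly}(n))$ is absorbed into $j(n)$.) I do not expect a deep obstruction here; the closest thing to one is this bookkeeping, plus two easy-to-miss points: the padding qubits must be adjoined to $\mathcal H_W$ and not $\mathcal H_A$ (adjoining them to the ancilla register would leave $\mathcal V'_x=\mathcal V_x^r$ with no multiplicity blow-up, and the argument collapses), and the rounding analysis genuinely needs the \emph{strict} bound $|\hat y-y'/2^{j}|<1$ to exclude $\hat y\in\{A-1,B+1\}$, which is why $j(n)$ is chosen to push the padding term strictly below $\tfrac12$.
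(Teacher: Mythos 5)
Your proposal is correct and follows essentially the same route as the paper: pad the witness register with identity qubits so that all multiplicities are multiplied by $2^{j}$, query the oracle at normalization $2^{c(w_r+j)}$, divide by $2^{j}$ and round, with $j=O\bigl(w_r/(1-c)\bigr)=O(\mathrm{poly}(|x|))$ thanks to the hypothesis $1-c=\Omega(\mathrm{poly}(|x|)^{-1})$. Your rounding analysis (the explicit $\tfrac12+\epsilon\,2^{cw_r}2^{-(1-c)j}$ bound and the strict-inequality argument pinning $\hat y$ to $[A..B]$) is in fact slightly more careful than the paper's terse version of the same step.
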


 Given the fact that exact counting classes are equivalent under polynomial-time reductions \cite{Brown_2011,counting-O}, and as shown in Ref.~\cite{bordewich2009approximatecountingquantumcomputation} there are efficient classical additive-error approximations to all {\sf \#P} and {\sf GapP} problems to some normalization $u$ there must exist classical additive approximations to {\sf \#BQP} relations. The key point that makes this consistent with the previous {\sf BQP}-hardness result is that the reductions between the classes have a profound effect on the size of the witnesses, leading to a very different normalization $u$ between the quantum and classical approximation algorithms.  We use the reduction in Ref.~\cite{Brown_2011} that maps a {\sf \#BQP} relation to a {\sf GapP} function with a witness size that results in a normalization $u$ which is significantly larger than the one in Theorem~\ref{thm:BPPA-BQP}. A downside of the construction is that the achievable normalization likely depends on the gate set used.

\begin{restatable}{thm}{classicalapprox}\label{thm:classical-approx-sharpBQP}
 For a {\sf\#BQP} relation $r$ with input $x$, let $h_r \equiv h_r(|x|)$ be number of Hadamard gates in the circuit $V^{r,(|x|)}$ using the $\{\text{Toffoli},H,S\}$ gate set and let $t_r \equiv t_r(|x|)$ be the circuit depth and $a_r \equiv a_r(|x|)$ be the number of ancilla qubits. Then classical additive-error approximations exist with normalization factor $u(|x|)=2^{2t_r(w_r+a_r) - (a_r +1) - h_r}$.
\end{restatable}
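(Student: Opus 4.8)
The plan is to reduce to the known classical picture for {\sf GapP}: apply the reduction of Ref.~\cite{Brown_2011} to turn the given {\sf \#BQP} relation $r$ into a {\sf GapP} function $g$, read off the witness length that reduction produces, feed $g$ to the classical additive-approximation algorithm for {\sf GapP} of Ref.~\cite{bordewich2009approximatecountingquantumcomputation} (whose normalization is $2$ raised to the witness length), and undo an overall scaling to land on $u(|x|)=2^{2t_r(w_r+a_r)-(a_r+1)-h_r}$.

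In more detail, fix an input $x$, write $V\equiv V^{r,(|x|)}$ in the $\{\text{Toffoli},H,S\}$ gate set with parameters $w_r,a_r,t_r,h_r$, and let $\mathcal V_x^{r}$ be the associated acceptance operator. The reduction of Ref.~\cite{Brown_2011} yields a {\sf GapP} function $g$ with $g(x)\in r(x)$; concretely, a spectral-counting quantity of $\mathcal V_x^{r}$ is expanded as a Feynman sum over pairs of computational-basis paths through $V^{\dagger}\Pi_{\text{out}}V$ --- one path running through the $t_r$ gates of $V$ and one through the $t_r$ gates of $V^{\dagger}$. Such a path-pair is specified by the configurations of the $w_r+a_r$ non-input qubits recorded across these $2t_r$ gates, that is by $2t_r(w_r+a_r)$ bits, of which the $a_r$ ancilla bits of the initial register are fixed to $0^a$ and one bit is pinned to $\ket{1}$ by $\Pi_{\text{out}}$; hence the {\sf NP} problem underlying $g$ has witnesses of length $W:=2t_r(w_r+a_r)-(a_r+1)$. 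Because Toffoli and $S$ gates do not branch in the computational basis while each of the $2h_r$ Hadamard gates across the two paths contributes $\pm 1/\sqrt2$, the weight attached to a path-pair lies in $\{0,\pm 2^{-h_r}\}$ after taking the real part; absorbing the overall $2^{-h_r}$ rescales these to $\{-1,0,+1\}$-valued, polynomial-time-computable weights without introducing new summation variables, so $g$ is a genuine {\sf GapP} function with witness length $W$ and $g(x)=2^{h_r}\,m^{*}$ for the selected $m^{*}\in r(x)$.

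Now invoke Ref.~\cite{bordewich2009approximatecountingquantumcomputation}: since $g$ is a {\sf GapP} function whose witnesses have length $W$, there is a classical randomized algorithm that, given $x$ and any $\epsilon=\Omega(\text{poly}(|x|)^{-1})$, runs in time $\text{poly}(|x|,1/\epsilon,\log(1/\delta))$ and outputs $\tilde g(x)$ with $|g(x)-\tilde g(x)|\le\epsilon\,2^{W}$ with probability $1-\delta$. Setting $\tilde r(x,u):=2^{-h_r}\,\tilde g(x)$ (computable, since $h_r$ is read off from the circuit in polynomial time),
\[
  |m^{*}-\tilde r(x,u)| \;=\; 2^{-h_r}\,|g(x)-\tilde g(x)| \;\le\; \epsilon\,2^{W-h_r} \;=\; \epsilon\,2^{\,2t_r(w_r+a_r)-(a_r+1)-h_r} \;=\; \epsilon\,u(|x|),
\]
and since $m^{*}\in r(x)$ we get $\min\{|\tilde r(x,u)-y|:y\in r(x)\}\le\epsilon\,u(|x|)$, matching Definition~\ref{def:additive-error-approximations}.

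The main obstacle is the middle step: one must reproduce the reduction of Ref.~\cite{Brown_2011} carefully enough to be certain of the two quantitative claims --- that the spectral-counting quantity is captured by a Feynman sum with exactly the stated witness length $2t_r(w_r+a_r)-(a_r+1)$ (in particular that the traversal count is $2t_r$, rather than some $2kt_r$ coming from a power $(\mathcal V_x^{r})^{k}$), and that the Hadamard amplitudes contribute an overall $2^{-h_r}$ so that the rescaling produces precisely the $-h_r$ in the exponent; reproducing Ref.~\cite{Brown_2011}'s construction is the substantive step here. Two side points each need only a line: Ref.~\cite{Brown_2011}'s reduction was originally phrased under the promise-gap definition of {\sf \#BQP}, and one should check it goes through for the miscounting/relational definition used here (the extra freedom only helps, since it suffices for $g(x)$ to hit any point of $r(x)$); and the relational wrapper is harmless, because $g(x)$ is an ordinary integer-valued {\sf GapP} function that merely happens to lie in $r(x)$, so the algorithm of Ref.~\cite{bordewich2009approximatecountingquantumcomputation} applies verbatim. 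Finally, the gate-set caveat flagged in the text is exactly the observation that the exponent $2t_r(w_r+a_r)-(a_r+1)-h_r$ is gate-set dependent (most visibly through $h_r$ and through how $t_r$ counts the circuit's gates), which is why the statement is pinned to $\{\text{Toffoli},H,S\}$.
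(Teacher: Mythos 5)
Your overall route---expand a quantity associated with $\mathcal V_x^{r}$ as a Feynman path sum over the $\{\text{Toffoli},H,S\}$ circuit, package the $\pm 1$ path weights into a {\sf GapP} function, invoke the classical additive-approximation algorithm of Ref.~\cite{bordewich2009approximatecountingquantumcomputation} with normalization $2^{W}$, and divide by $2^{h_r}$---is the same as the paper's, and your witness-length bookkeeping $W=2t_r(w_r+a_r)-(a_r+1)$ matches. However, the middle step contains a genuine error: you assert that the resulting {\sf GapP} function satisfies $g(x)=2^{h_r}m^{*}$ for some $m^{*}\in r(x)$. What the unamplified path sum actually computes is $g(x)=2^{h_r}\,\mathrm{Tr}[\mathcal V_x^{r}]$, and $\mathrm{Tr}[\mathcal V_x^{r}]$ is in general neither an integer nor an element of $r(x)=[N_{\geq c}(\mathcal V_x^{r})..N_{\geq s}(\mathcal V_x^{r})]$: accepting eigenvalues contribute only $c\leq\lambda\leq 1$ each and rejected ones contribute up to $s$ each, so the trace can miss every point of $r(x)$ by as much as $\max\{(1-c),s\}\,2^{w_r}$. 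Consequently your chain $|m^{*}-\tilde r(x,u)|=2^{-h_r}|g(x)-\tilde g(x)|$ does not go through as written. The alternative reading of your step---actually using the reduction of Ref.~\cite{Brown_2011} to produce an exact integer in $r(x)$---does not rescue it either, because that reduction requires amplifying the verifier, which inflates the gate count and hence the witness length, so the exponent would no longer be $2t_r(w_r+a_r)-(a_r+1)-h_r$.

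The missing idea, which is exactly how the paper closes this gap, is that you do not need an exact member of $r(x)$: it suffices to additively approximate $\mathrm{Tr}[\mathcal V_x^{r}]$ itself, because
\begin{equation}
N_{\geq c}(\mathcal V_x^{r})-(1-c)2^{w_r}\;\leq\;\mathrm{Tr}[\mathcal V_x^{r}]\;\leq\;N_{\geq s}(\mathcal V_x^{r})+s\,2^{w_r}\,,
\end{equation}
and the slack $O(2^{w_r})$ is negligible compared to the normalization $u=2^{2t_r(w_r+a_r)-(a_r+1)-h_r}$ (using $h_r\leq t_r$, one has $2^{w_r}/u$ small enough that the combined error $\epsilon'=(1-c)2^{w_r}/u+\epsilon$ remains inverse polynomial whenever $\epsilon$ is). Inserting this sandwich step after your estimate of $\mathrm{Tr}[\mathcal V_x^{r}]$, and dropping the false claim $g(x)=2^{h_r}m^{*}$, turns your proposal into essentially the paper's proof.
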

Note that this is in general not a particularly useful approximation, as $u$ is typically much greater than the range of values that the {\sf\#BQP} function can return in the first place.

\subsection{The Relationship between Additive-Error Approximations to {\sf\#BQP} and {\sf DQC$_1$}}\label{sec:add-apprx-sharp-bqp-dqc1}
 Motivated by a collection of results showing that additive approximations to problems related to {\sf\#BQP}, including the log-local quantum partition function and the log-local quantum density of states problems, are {\sf DQC$_1$}-complete \cite{brandao2008entanglement,Chowdhury2021PartitionDQC1,gyurik2022towardsquantum}, we analyze the general connection between additive approximations to {\sf \#BQP} relations and the complexity class {\sf DQC$_1$}. Although the density of states problem is {\sf \#BQP}-complete, the reductions used affect the levels of additive error achievable, so conclusions are not immediate about the relationship between additive approximations to general {\sf\#BQP} relations and {\sf DQC$_1$}. We unify these results by showing that additive approximations to a subset of {\sf \#BQP} relations, those requiring only logarithmically many ancilla qubits, have the same computational power as {\sf DQC$_1$}. The proofs of this theorem is presented in Section \ref{sec:proofs-DQC1-sharpBQP}.

\begin{restatable}{thm}{BPPAisDQC}\label{thm:BPPA-DQC1}
 Let $A$ be an oracle that given as input a {\sf \#BQP} relation $r$ with verifier circuits $V^{r,(n)}$, where $a_r(n) = O(\log(w_r(n)))$, an input string $x$, and some $\epsilon=\Omega(\text{poly}(|x|)^{-1})>0$,  returns an additive-error approximation to $r(x)$ with $u(|x|) =2^{w_r(|x|)}$. Then {\sf DQC}$_1$ $=$ {\sf BPP}$^A$.
\end{restatable}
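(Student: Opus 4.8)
The goal is to show {\sf DQC}$_1$ $=$ {\sf BPP}$^A$, where $A$ is the oracle returning additive approximations with normalization $2^{w_r}$ to {\sf \#BQP} relations whose verifier circuits use only $O(\log w_r)$ ancilla qubits. I would prove this by two inclusions, mirroring the structure that Theorem~\ref{thm:BPPA-BQP} presumably uses, but with the logarithmic-ancilla restriction playing the role that promotes {\sf BQP} down to {\sf DQC}$_1$.

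\textbf{Inclusion {\sf BPP}$^A\subseteq$ {\sf DQC}$_1$.} The plan is to show that a single query to $A$ can be simulated by a {\sf DQC}$_1$ (equivalently {\sf DQC}$_{\log}$) computation. Given a {\sf \#BQP} relation $r$ with verifier circuit $V^{r,(|x|)}$ on $w\equiv w_r(|x|)$ witness qubits and $a\equiv a_r(|x|)=O(\log w)$ ancilla qubits, note that the desired output $y\in r(x)$ is an eigenvalue count of $\mathcal V_x^r$, and that $\mathrm{Tr}[\mathcal V_x^r]/2^{w}$ is, up to the allowed miscounting in the promise gap $(s^r,c^r)$, equal to $y/2^w$, i.e.\ the additive approximation with normalization $2^w$ that $A$ returns is precisely a good estimate of the normalized trace $\mathrm{Tr}[\mathcal V_x^r]/2^w$. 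So it suffices to estimate this normalized trace to additive precision $\epsilon$ in {\sf DQC}$_1$. Writing $\mathrm{Tr}[\mathcal V_x^r] = \mathrm{Tr}\!\big[(\bra{0^a}\otimes\mathbb I_W)(V^{r})^\dagger \Pi_{\mathrm{out}} V^{r}(\ket{0^a}\otimes\mathbb I_W)\big]$, the maximally mixed state on the $w$ witness qubits supplies the $\mathbb I_W/2^w$ factor, and the $a=O(\log w)$ ancillas are exactly the logarithmically many pure qubits that {\sf DQC}$_{\log}$ permits; the one distinguished output qubit of $\Pi_{\mathrm{out}}$ is measured. This is the standard {\sf DQC}$_1$ trace-estimation routine (Hadamard-test style), and it runs in polynomial time, giving an $\epsilon$-additive estimate of $\mathrm{Tr}[\mathcal V_x^r]/2^w$ with high probability, hence a valid answer to the $A$-query. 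A {\sf BPP}$^A$ computation makes polynomially many such queries and postprocesses classically; since {\sf DQC}$_1$ is closed under polynomial-time classical pre/post-processing and composition of this form (again via {\sf DQC}$_{\log}$), the whole computation lies in {\sf DQC}$_1$.

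\textbf{Inclusion {\sf DQC}$_1\subseteq$ {\sf BPP}$^A$.} Here I would take an arbitrary $A\in{\sf DQC}_1$ with acceptance operator $\mathcal D_x$ as in Eq.~\eqref{eq:dqc1-acceptance-op}, acting on $w(|x|)$ witness qubits and $a(|x|)=O(\log|x|)$ ancilla qubits, separating {\sf YES} from {\sf NO} by the normalized-trace thresholds $c,s$. The task is to package the circuit $D^{(x)}$ as the verifier of a {\sf \#BQP} relation $r$ so that $r(x)$ recovers $\mathrm{Tr}[\mathcal D_x]$ up to the tolerated miscounting, and crucially so that the ancilla count of the {\sf \#BQP} verifier is $O(\log w_r)$, making it a legal input to $A$. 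Transpiling $D^{(x)}$ into the $\{\mathrm{Toffoli},H,S\}$ gate set and absorbing the fixed classical ``input'' into the circuit (as the {\sf DQC}$_1$ definition already does) keeps the ancilla register at $O(\log|x|)$; one then identifies $\mathcal D_x$ with an acceptance operator $\mathcal V_x^r$ and sets $c^r,s^r$ to a pair with an inverse-polynomial gap straddling the {\sf DQC}$_1$ thresholds. The eigenvalue counts $N_{\geq c^r}(\mathcal V_x^r)$ and $N_{\geq s^r}(\mathcal V_x^r)$ differ from $\mathrm{Tr}[\mathcal V_x^r]$ by at most the number of eigenvalues in the gap, which is at most $2^{w}$, so dividing by $u=2^{w}$ the $A$-oracle's answer $\tilde r/2^w$ lands within $O(\epsilon)$ of $\mathrm{Tr}[\mathcal D_x]/2^w$; choosing $\epsilon$ a small inverse polynomial below $(c-s)/C$ lets a {\sf BPP} machine compare the returned value to $(c+s)/2$ and decide. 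The point requiring care is verifying $a_r = O(\log w_r)$: since $w_r = w(|x|) = \mathrm{poly}(|x|)$ and $a_r = O(\log|x|)$, we indeed have $a_r = O(\log w_r)$ as long as $w_r$ is at least polynomially large in $|x|$, which can be arranged by padding the witness register with dummy qubits (this only rescales $u$ by a known factor and does not disturb the trace ratio).

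\textbf{Main obstacle.} I expect the delicate part to be the {\sf DQC}$_1\subseteq{\sf BPP}^A$ direction, specifically controlling the interaction between the $(c^r,s^r)$ promise gap of the {\sf \#BQP} relation and the additive precision $\epsilon$ while simultaneously respecting the $a_r=O(\log w_r)$ constraint. The miscounting slack means $r(x)$ is an \emph{interval} of integers rather than a single value, so I must make sure the oracle's normalized output is guaranteed to be within $O(\epsilon)$ of $\mathrm{Tr}[\mathcal D_x]/2^w$ (not merely within $O(\epsilon)$ of some faraway admissible integer divided by $2^w$) — this forces a careful choice of $c^r,s^r$ close together and close to the {\sf DQC}$_1$ thresholds, and a check that the number of ``miscounted'' eigenvalues, once divided by $2^w$, is itself negligible compared to $c-s$. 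The other direction is comparatively routine, being essentially the textbook reduction of normalized-trace estimation to {\sf DQC}$_1$ plus the {\sf DQC}$_1={\sf DQC}_{\log}$ equivalence already granted in Section~\ref{sec:BQP-DQC1}.
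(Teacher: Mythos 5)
Both directions of your proposal contain a genuine gap, and in both cases it is the central mechanism of the paper's proof that is missing. For {\sf BPP}$^A\subseteq{\sf DQC}_1$, your claim that $\mathrm{Tr}[\mathcal V_x^{r}]/2^{w}$ equals $y/2^{w}$ for some $y\in r(x)$ ``up to the allowed miscounting'' is false: the miscounting promise only concerns eigenvalues inside $(s^r,c^r)$, whereas the trace also picks up mass from eigenvalues below $s^r$ (up to $s^r\,2^{w}$) and loses mass from eigenvalues above $c^r$ (up to $(1-c^r)2^{w}$). With, say, $c^r=2/3$, $s^r=1/3$ and all $2^{w}$ eigenvalues equal to $0.3$, one has $r(x)=\{0\}$ but $\mathrm{Tr}[\mathcal V_x^{r}]=0.3\cdot 2^{w}$, so plain {\sf DQC}$_1$ trace estimation does not return a legal oracle answer for any $\epsilon<0.3$. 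The fix requires amplifying completeness and soundness to $1-\varepsilon$ and $\varepsilon$ \emph{before} estimating the trace, and the whole difficulty in the logarithmic-ancilla setting is doing this amplification without a polynomial ancilla overhead; the paper achieves it with a quantum-singular-value-transformation threshold projector (Ref.~\cite{Gilyen:2019a}) that uses only two extra pure qubits on top of the $a_r(|x|)=O(\log w_r(|x|))$ ancillas, and only then runs the sampling-based trace estimator. Your sketch has no amplification step at all, so the simulation of $A$ fails.

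For {\sf DQC}$_1\subseteq{\sf BPP}^A$, the single-query reduction you describe cannot work, and your own bound already shows why: the deviation of $N_{\geq c^r}(\mathcal V_x^{r})$ from $\mathrm{Tr}[\mathcal D_x]$ can be of order $2^{w}$ even when \emph{no} eigenvalues lie in $(s^r,c^r)$ (e.g.\ all eigenvalues equal to $0.4$ with thresholds near $0.85$ give count $0$ but trace $0.4\cdot 2^{w}$), because an eigenvalue count is insensitive to the magnitudes of eigenvalues on either side of the threshold. Dividing an error of order $2^{w}$ by $u=2^{w}$ leaves an $O(1)$ error, which swamps the inverse-polynomial {\sf DQC}$_1$ gap $c-s$; no choice of $c^r,s^r$ ``close together and straddling the thresholds'' repairs this, since there is no promise limiting how many eigenvalues sit anywhere in $[0,1]$. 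The paper instead makes polynomially many oracle queries at a ladder of thresholds $c_i=(M-i)/M$, $s_i=c_i-1/4M$ (encoded into the inputs $x_i=y_ix$ via the ${\sf \#BQP}_{(c^r,s^r)}$ formulation with input-dependent thresholds, adapted to per-input circuit generation), and reconstructs $\mathrm{Tr}[\mathcal D_x]$ by the layer-cake estimator $\sum_i\Delta_i\hat\chi_i$ with total error $\tfrac{5}{2}\cdot 2^{w}/M$, after which $M>5/(c-s)$ decides the promise problem. The parts of your outline that do match the paper (using ${\sf DQC}_1={\sf DQC}_{\log}$, the maximally mixed witness register supplying $\mathbb I_W/2^{w}$, closure under classical postprocessing) are correct but peripheral; the two ideas that carry the proof---ancilla-frugal gap amplification and multi-threshold trace reconstruction---are absent.
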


Theorem~\ref{thm:BPPA-DQC1} immediately implies the {\sf DQC$_1$} completeness results of contained in \cite{brandao2008entanglement,gyurik2022towardsquantum}, as the verifier circuits for calculating the density of states only require logarithmically many ancillas, and polynomial-time reductions can be used to show that this result implies the {\sf DQC$_1$} completeness of computing additive error approximations to partition functions in \cite{Chowdhury2021PartitionDQC1}.

\section{Proofs of Results}\label{sec:proofs}

\subsection{Additive-Error Approximations to {\sf\#BQP}}\label{sec:proofs-add-error-apprx-to-sharp-bqp}

Prior to proving Theorem~\ref{thm:BPPA-BQP} we will introduce an alternative definition of {\sf \#BQP} and a {\sf BQP}-complete problem that will make the reductions easier. The alternative version of {\sf \#BQP} below allows the promise gap to explicitly depend on the input.

\begin{definition}[{\sf \#BQP$_{(c^r,s^r)}$}]\label{def:sharp-BQP-new}
    A relation $r\subset J \times \mathbb Z^+$ is in {\sf\#BQP} if and only if there exists a uniformly-generated family of quantum circuits $V^{r,(n)}$ and a polynomial $q^r$ such that
    $$
       r(x) = [N_{\geq c^r(x)}(\mathcal V_x^{r})..N_{\geq s^r(x)}(\mathcal V_x^{r})]\,, 
    $$ 
where $c^r,s^r:\{0,1\}^*\to [0,1]$ are efficiently-computable functions satisfying $c^r(x)-s^r(x)>1/q^r(|x|)$.
\end{definition}
We show in Appendix~\ref{app:sharp-bqp-equiv} that Definition~\ref{def:sharp-BQP-new} is equivalent to Definition~\ref{def:sharp-BQP}. The {\sf BQP}-complete problem we introduce is that of estimating the average acceptance probability of verifier circuits. We prove its {\sf BQP}-completeness in Appendix~\ref{app:avg-accp-verifer-bqp-complete}.

\begin{definition}[Average Accept Probability of Verifier Circuits]\label{def:avg-accp-verifier-circ}
Given functions $c,s: \mathbb N \to [0,1]$ satisfying $c(n)-s(n) = \Omega(\text{poly}(n)^{-1})$ and a polynomial-time uniform family of circuits $\{V^{(n)}:n\in\mathbb N\}$, each acting on $w(n) = O(\text{poly}(n))$ witness qubits, $a(n) = O(\text{poly}(n))$ ancilla qubits, and $n$ input qubits, the Average Accept Probability of Verifier Circuits problem is to, given an input $x$, decide whether 
\begin{enumerate}
    \item $\frac{\text{Tr}[\mathcal V_x]}{2^{w(|x|)}}\geq c(|x|)$\,,
    \item or $\frac{\text{Tr}[\mathcal V_x]}{2^{w(|x|)}}\leq s(|x|)$\,,
\end{enumerate}
promised one of these is the case, where $\mathcal V_x$ is the acceptance operator of Eq.~\eqref{eq:gen-acceptance-op} corresponding to $V^{(|x|)}$ on input $x$.   
\end{definition}
Using these two definitions, we are now ready to prove Theorem~\ref{thm:BPPA-BQP}, restated here for convenience.

\BPPAisBQP*

\begin{proof}
     In order to show {\sf BPP}$^A$ $\subseteq$ {\sf BQP}, we prove there exists a polynomial-time quantum subroutine which simulates $A$. Given a verifier circuit $V^{r,(|x|)}$ with completeness and soundness parameters $c$ and $s$, we construct a new circuit $\tilde{V}^{r,(|x|)}$ with amplified completeness and soundness parameters $\tilde{c}$ and $\tilde{s}$ to be chosen later. Note that such a circuit can always be constructed in polynomial time while preserving the dimension of the witness subspace~\cite{Marriott:2005a,Brown_2011}. The algorithm is then to generate $M=\text{poly}(|x|)$ uniformly random computational basis states for the witness subspace and send them through the amplified verifier circuit, then count how many accept and multiply this number by $2^{w_r(|x|)}/M$. Let $X_i$ be the indicator function that evaluates to $1$ if the $i$th bit string accepts. Then the estimator is
        \begin{align}\label{eq:q-add-approx-estimator}
            X = \frac{2^{w_r(|x|)}}{M} \sum_{i=1}^MX_i\,.
        \end{align}
        It can be shown that $\mathbb E[X] = \text{Tr}[\tilde{\mathcal{V}}_x^{r}]$ where $\tilde{\mathcal{V}}_x^{r}$ are the acceptance operators associated to the amplified verifier circuits $\tilde{V}^{r,(|x|)}$, as in Eq.~\eqref{eq:gen-acceptance-op}. In terms of the amplified completeness and soundness parameters, we have
        \begin{align}\label{eq:trace_estimator}
            N_{\geq c}(\mathcal V_x^{r}) -(1-\tilde c)2^{w_r(|x|)} \leq \text{Tr}[\tilde{\mathcal V}_x^{r}]\leq N_{\geq s}(\mathcal V_x^{r})+\tilde s2^{w_r(|x|)}\,.
        \end{align}
        The variance of $X$ is then given by
        \begin{align}
            \text{Var}[X] = \frac{1}{M}\text{Tr}[\tilde{\mathcal V}_x^{r}](2^{w_r(|x|)}-\text{Tr}[\tilde{\mathcal V}_x^{r}])\,.
        \end{align}
        From this it follows that
        \begin{equation}
        \begin{aligned}
            \text{Pr}(|X - \text{Tr}[\tilde{\mathcal V}_x^{r}]|\geq \epsilon 2^{w_r(|x|)})
            \leq & \frac{\text{Var}[X]}{\epsilon^2 2^{2w_r(|x|)}}\\
             = & \frac{1}{M\epsilon^2}\frac{\text{Tr}[\tilde{\mathcal{V}}_x^{r}]}{2^{w_r(|x|)}}\left(1-\frac{\text{Tr}[\tilde{\mathcal{V}}_x^{r}]}{2^{w_r(|x|)}}\right) \\
             \leq& \frac{1}{M\epsilon^2}\,.
        \end{aligned}
        \end{equation}
        Thus with constant failure probability $\delta\leq 1/4$ and $M= O(1/\epsilon^2)$ samples we can output a sample $\tilde X$ satisfying
        \begin{align}
            N_{\geq c}(\mathcal V_x^{r}) -(1-\tilde c +\epsilon)2^{w_r(|x|)} \leq\tilde X \leq  N_{\geq s}(\mathcal V_x^{r})+(\epsilon +\tilde s)2^{w_r(|x|)}\,.
        \end{align}
            Now, we choose $\tilde c \geq 1-1/\text{poly}(|x|)$ and $\tilde s \leq 1/\text{poly}(|x|)$ such that both $(1-\tilde c +\epsilon)$ and $\tilde s+\epsilon$ are less than $2\epsilon = \epsilon' = \Omega(\text{poly}(|x|)^{-1})$ so that the sample satisfies
            \begin{align}
                 N_{\geq c}(\mathcal V_x^{r}) -\epsilon'2^{w_r(|x|)} \leq\tilde X \leq  N_{\geq s}(\mathcal V_x^{r})+\epsilon' 2^{w_r(|x|)}\,.
                 \label{eq:normalized_trace_estimation_BQP}
            \end{align}
            The success probability can then be amplified to $1-\delta'$ with $\log(1/\delta')$ applications of the previous procedure and then taking the median of the $\log(1/\delta')$ estimates. We thus have a quantum algorithm that outputs an additive-error estimate with normalization $u(|x|)=2^{w_r(|x|)}$ that has a total runtime and size polynomial in $|x|$, $1/\epsilon'$ and $\log\left(1/\delta'\right)$. Therefore, {\sf BPP}$^A\subseteq$ {\sf BQP}. 

    On the other hand, to show {\sf BQP} $\subseteq$ {\sf BPP$^A$}, we  use the alternative version of {\sf \#BQP$_{(c^r,s^r)}$} from Definition~\ref{def:sharp-BQP-new}. Then querying $A$ for a {\sf \#BQP$_{(c^r,s^r)}$} relation $r$ on a polynomial number of inputs $x_i$ with normalizations $u_i=2^{w(|x_i|)}$, we show how this allows us to solve an instance of the Average Accept Probability of Verifier Circuits problem in Definition~\ref{def:avg-accp-verifier-circ} with parameters $c(|x|)$ and $s(|x|)$, a {\sf BQP}-complete problem (see Lemma~\ref{lem:avg-accept-verifer-circ-BQP-comp}). In particular, let $x\mapsto x_i = y_ix$ where $y_i$ is a bit string which encodes the correct completeness and soundness parameters to use for the corresponding input. Choosing a fixed length prefix-free encoding for $y_i$ ensures uniqueness of each concatenated bit string, and each $x_i$ has the same number of bits, which will be necessary to avoid repeated inputs. The completeness and soundness parameters, $c_i$ and $s_i$, are then determined by $y_i$.     

    Partition the interval $[0,1]$ into $M=\mathcal{O}(\text{poly}(|x|))$ subintervals $[s_i,c_i]$, each of size $\Omega(\text{poly}(|x|)^{-1})$, corresponding to the different completeness and soundness parameters specified by $y_i$. Specifically, we consider the subintervals $[s_i,c_i]$ where $c_i=(M-i)/M$ and $s_i=c_i-1/4M$ for $1\leq i\leq M-1$. Then querying $A$ for each problem input $x_i$ returns an estimate
    \begin{equation}
        \hat{n}_i=N_{\geq c_i}+\delta_i+\varepsilon_i\,,
    \end{equation}
    with $\delta_i\in[0,N_{[s_i,c_i]}]$ and $|\varepsilon_i|\in[0,\epsilon 2^{w(|x_i|)}]$. Letting $\hat{n}_0=0$, we define the two quantities $\hat{\chi}_i=\hat{n}_i-\hat{n}_{i-1}$ and $\Delta_i=c_i+1/2M$. Then we have
    \begin{equation}
    \begin{aligned}
        \sum_{i=1}^M \Delta_i\hat{\chi}_i&=\sum_{i=1}^M \Delta_i\left(N_{[c_i,c_{i-1}]}+\delta_i-\delta_{i-1}+\varepsilon_i-\varepsilon_{i-1}\right)\,, \\
        &=\sum_{i=1}^M \Delta_iN_{[c_i,c_{i-1}]}+\sum_{i=1}^{M-1}(\delta_i+\varepsilon_i)+\underbrace{\frac{1}{2M}(\delta_M+\epsilon_M)}_{=0}\,,
    \end{aligned}
    \end{equation}
    where the last term vanishes as a result of the fact that for the interval $[s_M,c_M]$, we need not query the oracle and instead always return $2^{w(|x|)}$, meaning $\delta_M,\epsilon_M=0$.

    This quantity approximates the trace of the associated acceptance operator $\mathcal{V}_{x}^{r}$ of the Average Accept Probability of Verifier Circuits problem. In particular, note that
    \begin{equation}
        |\Delta_i N_{[c_i,c_{i-1}]}-\text{Tr}_{[c_i,c_{i-1}]}(\mathcal{V}_{x}^{r})|\leq \frac{1}{2M}N_{[c_i,c_{i-1}]}\,,
    \end{equation}
    which then implies that 
    \begin{equation}
        \bigg|\sum_{i=1}^M\Delta_iN_{[c_i,c_{i-1}]}-\text{Tr}(\mathcal{V}_{x}^{r})\bigg|\leq \frac{2^{w(|x|)}}{2M}\,.
    \end{equation}
    Finally, we bound the error resulting from $\delta_i$ and $\varepsilon_i$ by noting that
    \begin{equation}
        \frac{1}{M}\sum_{i=1}^M \delta_i\leq \frac{2^{w(|x|)}}{M},\quad \text{and}\quad \frac{1}{M}\sum_{i=1}^M \varepsilon_i \leq \epsilon 2^{w(|x|)}\,,
    \end{equation}
    where the first inequality holds from the fact that $[s_i,c_i]\cap[s_j,c_j]=\emptyset$ for all $i\neq j$. In the second inequality we have used the fact that $2^{w(|x_i|)}=2^{w(|x|)}$ for all $i$, or rather, the size of the witness subspace for each input $x_i$ is the same size as that of $x$ even though the classical inputs may vary in size.
    Setting $\epsilon=1/M$, we add the errors from each contribution and find that
    \begin{equation}
        \bigg|\sum_{i=1}^M\Delta_i\hat{\chi}_i-\text{Tr}(\mathcal{V}_x^{r})\bigg|\leq \frac{2^{w(|x|)}}{M}+\frac{2^{w(|x|)}}{M}+\frac{2^{w(|x|)}}{2M}=\frac{5}{2}\frac{2^{w(|x|)}}{M}\,.
    \end{equation}
    Choosing $M>5/(c(|x|)-s(|x|))$ suffices to distinguish between the accepting and rejecting cases of Definition~\ref{def:avg-accp-verifier-circ}. Therefore, {\sf BQP} $\subseteq$ {\sf BPP$^A$}.
\end{proof}

Next we prove Theorem~\ref{thm:better-u-is-sharpBQP-hard}, adapting a proof from Ref.~\cite{stackexchange}, demonstrating that returning additive approximations to {\sf \#BQP} with better normalization factors is {\sf \#BQP}-hard.

\bestapprox*

\begin{proof}
Given an input $x$ to a {\sf\#BQP} relation $r$ with corresponding verifier circuit $V^{r,(|x|)}$ and completeness and soundness parameters $c^r(|x|)$ and $s^r(|x|)$, define a new verifier circuit $V^{'r,(|x|)} = V^{r,(|x|)}\otimes \mathbb I^{\otimes l}$ corresponding to another {\sf \#BQP} relation $r'$ where the witness subspace has been expanded by $l$ qubits where $l$ will be chosen later. It then follows that $N_{\geq c^r(|x|)}(\mathcal V_x^{'r}) = 2^l N_{\geq c^r(|x|)}(\mathcal V_x^{r})$ and $N_{\geq s^r(|x|)}(\mathcal V_x^{'r}) = 2^l N_{\geq s^r(|x|)}(\mathcal V_x^{r})$. Now, assume an additive error estimate $\tilde r'(x,u)$ to the {\sf \#BQP} instance corresponding to the circuit $V^{'r,(|x|)}$ can be returned with a normalization $u_{r'}(|x|) = 2^{c(w_r(|x|)+l)}$ to some inverse polynomial precision $\epsilon<1$. Then the estimate $\tilde r = \tilde r'(x,u)/2^l$ satisfies
\begin{align}
    \min\{|\tilde r - y|: y\in r(x,\cdot)\} \leq 2^{w_r(|x|)-(1-c)l}\,.
\end{align}
If we take $l>\frac{w_r(|x|)}{1-c}+1$ then the error is less than $1/2$ and by rounding to the nearest integer, a number in $r(x,\cdot)$ can always be returned. This takes $l = O\left(\frac{w_r(|x|)}{1-c}\right)$ extra qubits and thus $l$, and the size of $V^{'r,(|x|)}$, will be upper bounded by a polynomial in $|x|$  whenever $1-c = \Omega(\text{poly}(|x|)^{-1})$. 

\end{proof}

We additionally show that efficient classical algorithms for returning additive-error approximations to {\sf \#BQP} functions exist, though they can yield larger normalization factors.

\classicalapprox*

\begin{proof}
Given a {\sf \#BQP} relation $r$ and an input $x$ with a corresponding verifier circuit $V^{r,(|x|)}$, set $n = w_r(|x|)$ as the number of witness qubits, $a = a_r(|x|)$ as the number of ancilla qubits, $T = t_r(|x|)$ as the size of the circuit in the $\{H,S,\text{Toffoli}\}$ gate set, and $h = h_r(|x|)$ as the number of Hadamard gates. Let $V^{r,(|x|)}=g_T\cdots g_1$ in the chosen gate set and let $\mathcal V_x^{r}$ be the corresponding verifier operator.      
     First note that because of the completeness and soundness parameters we have that 
     \begin{align}
          N_{\geq c}(\mathcal V_x^{r}) - (1-c)2^n\leq\text{Tr}[\mathcal V_x^{r}] \leq N_{\geq s}(\mathcal V_x^{r}) +s2^n\,.
     \end{align}
Define the operators $q_i$ via $q_i = g_i$ if $g_i\in\{\text{Toffoli},S\}$ and $q_i = \sqrt{2}g_i$ if $g_i = H$ so that all of the matrix elements of $q_i$ are in $\{1,-1,i,-i\}$. Now expand the trace

\begin{equation}
\begin{aligned}
    \text{Tr}[\mathcal V_x^{r}] &=  \sum_{y} \bra{0^a,x}g_1^\dagger g_2^\dagger \cdots g_T^\dagger\ketbra{1,y}{1,y}g_Tg_{T-1}\cdots g_1\ket{0^a,x}\\
     &=  \sum_{y,z_j} \langle 0^a,x|g_1^\dagger|z_1\rangle\!\langle z_1| g_2^\dagger|z_2\rangle \cdots \langle z_{T-1}|g_T^\dagger|1,y\rangle\!\langle 1,y|g_T|z_T\rangle\!\langle z_T| g_{T-1}|z_{T+1}\rangle\cdots \langle z_{2(T-1)}|g_1|0^a,x\rangle\\
      &=\frac{1}{2^{h}} \sum_{y,z_j} \langle 0^a,x|q_1^\dagger|z_1\rangle\!\langle z_1| q_2^\dagger|z_2\rangle \cdots \langle z_{T-1}|q_T^\dagger|1,y\rangle\!\langle 1,y|q_T|z_T\rangle\!\langle z_T| q_{T-1}|z_{T+1}\rangle\cdots \langle z_{2(T-1)}|q_1|0^a,x\rangle.
\end{aligned}
\end{equation}

We can then define two {\sf \#P} functions $g$ and $f$ as follows: for $g$ count the number of paths $p\in\{(x,y,z_1,...,z_{2(T+1)})\}$ such that the product 

\begin{align}
    &\langle 0^a,x|q_1^\dagger|z_1\rangle\!\langle z_1| q_2^\dagger|z_2\rangle \cdots \langle z_{T-1}|q_T^\dagger|1,y\rangle\!\langle 1,y|q_T|z_T\rangle\!\langle z_T| q_{T-1}|z_{T+1}\rangle\cdots \langle z_{2(T-1)}|q_1|0^a,x\rangle
\end{align} 
evaluates to $1$ and for $f$ count the number of paths such that it evaluates to $-1$. It then follows that the {\sf GapP} function defined by $r = g-f$ satisfies
 \begin{align}
     \frac{r}{2^h} = \text{Tr}[\mathcal V_x^{r}]\,.
 \end{align}
Next, it follows from Ref.~\cite{bordewich2009approximatecountingquantumcomputation} that we can additively approximate $r$ with $u = 2^{N^*}$ where $N^*=|p|$ is the number of bits needed to describe the paths $p$. Since $|x| = n$, $|y| = n+a -1$, and $|z_j| = n+a$ for all $j$, it follows that the number of bits to describe the path is $N^* =  2T(n+a)-(a+1)$. Now, it follows by approximating $r$ and dividing by $2^h$, we get an additive approximation $\tilde {\mathcal V}_x^{r}$ to $\text{Tr}[\mathcal V_x^{r}]$ with $u = 2^{N^*}/2^h$ we have 
\begin{align}
    N_{\geq c}(\mathcal V_x^{r}) -(1-c)2^n -\epsilon2^{N^*-h}  \leq \text{Tr}[\tilde {\mathcal V}_x^{r}] \leq N_{\geq s}(\mathcal V_x^{r}) +s2^n + \epsilon 2^{N^*-h}
\end{align}
which is equivalent to
\begin{align}
    N_{\geq c}(\mathcal V_x^{r}) -\epsilon '2^{N^*-h}  \leq \text{Tr} [\tilde {\mathcal V}_x^{r}] \leq N_{\geq s}(\mathcal V_x^{r}) + \epsilon'' 2^{N^*-h}
\end{align}
when
\begin{align}
    \epsilon' = (1-c)2^{n-N^*+h} + \epsilon\,,\qquad \epsilon'' = s2^{n-N^*+h} + \epsilon\,.
\end{align}
Since $n-N^*+h = n+a+h+1 -2T(n+a)$ and $h\leq T$, $\epsilon'$ and $\epsilon''$ are always inverse polynomial whenever $\epsilon$ is. Thus there are efficient classical additive approximations for {\sf\#BQP} functions when $u = 2^{2T(n+a) - (a+1)-h  }$.
\end{proof}

\subsection{{\sf \#BQP} and {\sf DQC$_1$}}\label{sec:proofs-DQC1-sharpBQP}

Here we prove that additive-error approximations to a subclass of {\sf \#BQP} functions, namely those whose verifier circuits require only a logarthimic number of ancilla qubits, is {\sf DQC}$_1$-complete with normalization scaling exponential in the dimension of the witness space.

\BPPAisDQC*

\begin{proof}
     In order to show that {\sf BPP}$^A$ $\subseteq$ {\sf DQC}$_1$, we prove there exists a polynomial-time quantum circuit, using only a logarithmic number of ancilla qubits, which simulates $A$. In particular, let $D^{r,(x)}$ be a given verifier circuit for a {\sf \#BQP} relation $r$ on an input $x$ with $w_r(|x|)$ qubits and $a_r(|x|)=O(\log(w_r(|x|)))$ ancilla qubits with completeness and soundness parameters $c(|x|)$ and $s(|x|)$ and an associated verifier operator $\mathcal D_x^{r}$. Then one can always construct a block encoding $U\coloneqq \bra{1}D^{r,(x)}(\ket{0}^{\otimes a_r(|x|)}\otimes\mathbb{I}_W)$, and then make use of Ref.~\cite[Theorem 19]{Gilyen:2019a} to implement a singular value threshold projector on $U$. This allows us to construct a new block encoding $U_\Phi=\bra{1}D_\Phi(\ket{0}\ket{+}\ket{0}^{\otimes a_r(|x|)}\otimes \mathbb{I}_W)$, for some $\Phi\in \mathbb{R}^p$, and where $D_\Phi$ is given by the circuit in Fig.~\ref{fig:inclusion-circuit}. The new encoding $U_\Phi$ now has singular values $\tilde{\varsigma}_i$ which are $\varepsilon$-close to $0$ or $1$ depending on whether the singular values $\varsigma_i$ of $U$ are below or above a threshold $t\in(0,1)$.
\begin{figure}[t!]
\includegraphics[width=\textwidth]{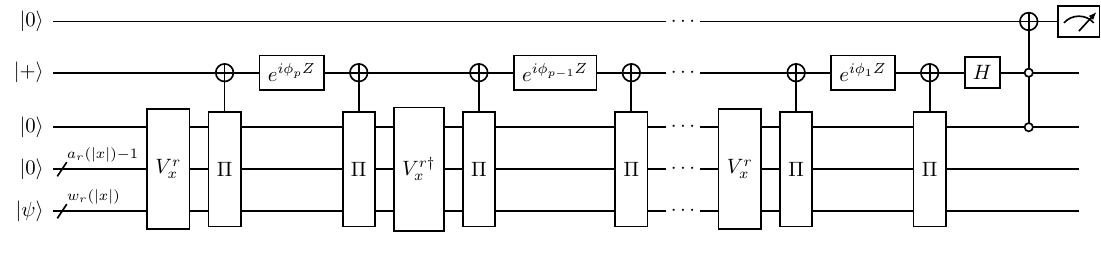}
\caption{\label{fig:inclusion-circuit} The circuit $D_\Phi$ implementing the quantum singular value transformation $U_\Phi = P(U)$ given in Ref.~\cite{Gilyen:2019a} using a degree-$p$, even polynomial approximation. The block encodings $D^{r,(x)}$ and $(D^{r,(x)})^\dagger$ are applied alternately followed by applications of the reflection operators. In particular, controlled on the witness register being in the image of $\Pi$, the ancilla qubit is flipped, then a single qubit Pauli-$Z$ rotation is applied to angles $\phi_i$, where the particular angles are classically determined according to the function being approximated by the polynomial. The final control gate performs an $X$ operation on the top qubit if and only if all of its control qubits are in the $0$ state. The top $a_r(|x|)+2$ qubits are ancillas and the bottom $w_r(|x|)$ qubits are the witness qubits.}
\end{figure}

    To construct the new block encoding $U_\Phi$, define a function on the interval $[-1,1]$ which takes the value $0$ on the interval $[-t,t]$ and $1$ elsewhere. A polynomial approximation $P$ can be constructed such that
    \begin{enumerate}
        \item $|P(x)|\leq 1$ for $x\in[-1,1]$\,, \\
        \item $P(x)\in[1-\varepsilon,1]$ for $x\in[-1,-t-\Delta]\cup[t+\Delta,1]$ with $\Delta\in(0,1)$\,,\\
        \item $P(x)\in[0,\varepsilon]$ for $x\in[-t+\Delta,t-\Delta]$ \,.
    \end{enumerate}
    Additionally, it can be shown that this polynomial approximation is even and has degree $p=O(\log(1/\varepsilon)/\Delta)$ using techniques similar to that of Ref.~\cite[Corollary 16]{Gilyen:2019a}. Finally, according to Ref.~\cite[Corollary 11]{Gilyen:2019a}, given the polynomial approximation $P$, one can construct a vector $\Phi\in\mathbb{R}^p$ which applies $P$ to the singular values of $U$, resulting in the new encoding $U_\Phi= P(U)$. The implementation of $P(U)$ is based on quantum signal processing techniques, and requires only one ancilla qubit with $p$ applications of $D^{r,(x)}$ and $(D^{r,(x)})^\dagger$ and $O(p)$ other elementary gates, as shown in Fig.~\ref{fig:inclusion-circuit}.

    Thus, given the verifier circuit $D^{r,(x)}$, we define the new verifier circuit $D_\Phi$ in Fig.~\ref{fig:inclusion-circuit} that implements the singular value threshold projector protocol on its block encoding $U$, setting $t=(s(|x|)+c(|x|))/2$ and $\Delta=(c(|x|)-s(|x|))/2$, to obtain the new encoding $U_\Phi=\bra{1}D_\Phi(\ket{0}\ket{+}\ket{0}^{\otimes a_r(|x|)}\otimes \mathbb{I}_W)$ with associated acceptance operator
    \begin{equation}
    \begin{aligned}
        \mathcal{D}_\Phi& = U_\Phi^\dagger U_\Phi\,, \\
        &=(\bra{0}\bra{+}\bra{0}^{\otimes a_r(|x|)}\otimes \mathbb{I}_W)D_\Phi^\dagger|1\rangle\langle 1|D_\Phi(\ket{0}\ket{+}\ket{0}^{\otimes a_r(|x|)}\otimes\mathbb{I}_W)\,, \\
        &=\sum_i \tilde{\varsigma}_i\ket{\psi_i}\bra{\psi_i}\,,
    \end{aligned}
    \end{equation}
    where $\ket{\psi_i}$ form an orthonormal basis and $\tilde{\varsigma}_i$ are the amplified singular values such that $\tilde{\varsigma}_i\in[1-\varepsilon,1]$ when $\varsigma_i\geq c(|x|)$ and $\tilde{\varsigma}_i\in[0,\varepsilon]$ when $\varsigma_i\leq s(|x|)$. Due to  the amplification, the trace of this operator satisfies
    \begin{align}
        N_{\geq c(|x|)}(\mathcal D_x^{r})-(2\varepsilon -\varepsilon^2)2^{w_r(|x|)}\leq\text{Tr}[\mathcal D_\Phi]\leq N_{\geq s(|x|)}(\mathcal D_x^{r}) + \varepsilon^22^{w_r(|x|)}\,.
    \end{align}
    
    Finally, the trace of this amplified acceptance operator can be additively estimated to normalization $u(|x|)=2^{w_r(|x|)}$ and precision $\epsilon$ via the same algorithm as in the proof of Theorem \ref{thm:BPPA-BQP} using $O(1/\epsilon^2)$ repetitions of $ D_\Phi$. This outputs an estimate $\tilde X$ satisfying
    \begin{align}
        N_{\geq c(|x|)}(\mathcal D_x^{r})-(2\varepsilon -\varepsilon^2 +\epsilon)2^{w_r(|x|)}\leq \tilde X \leq N_{\geq s(|x|)}(\mathcal D_x^{r}) + (\varepsilon^2+
        \epsilon)2^{w_r(|x|)}\,,
    \end{align}
    with probability greater than $3/4$.
    Thus by taking $\varepsilon=O(\log(\text{poly}(\epsilon')^{-1}))$ such that both $2\varepsilon -\varepsilon^2$ and $\varepsilon^2$ are less than or equal to $\epsilon$, an additive error estimate to normalization $2^{w_r(|x|)}$ and precision $\epsilon'=2\epsilon$ can be obtained. Then, noting that $\Delta = (c(|x|) - s(|x|))/2 = \Omega(\text{poly}(|x|)^{-1})$, an additive error estimate to normalization $u(|x|) = 2^{w(|x|)}$ and to precision $\epsilon'$ can be obtained using $O(\text{poly}\left(1/\epsilon'\right))$ runs of $O(\log(\text{poly}(1/\epsilon'))\cdot\text{poly}(|x|))$ size quantum circuits. Then repeating this procedure $\log(1/\delta')$ times and taking the median of estimates can improve the success probability to $1-\delta'$. Importantly since the circuits used in this algorithm require only a logarithmic overhead of $a_r(|x|)+2 = O(\log(w_r(|x|)))$ pure ancilla qubits, they can be computed using {\sf DQC}$_1$ circuits.
    Therefore {\sf BPP}$^A$ $\subseteq$ {\sf DQC}$_1$.

    On the other hand, to show that {\sf DQC}$_1$ $\subseteq$ {\sf BPP}$^A$, we use similar techniques as those in the proof of Theorem~\ref{thm:BPPA-BQP}. We show that querying $A$ for a {\sf \#BQP}$_{(c^r,s^r)}$ relation $r$ with $a_r(|x|) = O(\log(w_r(|x|)))$ ancilla qubits on a polynomial number of inputs $x_i$ to additive error, with normalizations $u_i=2^{w(|x_i|)}$, allows us to solve any {\sf DQC}$_1$ decision problem as given in Definition~\ref{def:dqc1}. In fact, the Average Accept Probability for Verifier Circuits problem in Definition~\ref{def:avg-accp-verifier-circ} reduces to the definition of {\sf DQC}$_1$ in Definition~\ref{def:dqc1} when only a logarithmic number of ancilla qubits are allowed. A subtle difference here is that we should like the circuits used to be generated on an input-by-input basis, as is done in the definition of {\sf DQC}$_1$. Rather than starting from Definition~\ref{def:sharp-BQP-new} as before, we modify it to allow for input-dependent verifier-circuit generation, with the acceptance operator $\mathcal{D}_x^{r}$ corresponding to the circuit $D^{r,(x)}$ for relation $r$ and input $x$.

    Establishing this fact, the remainder of the proof is identical to the proof of Theorem~\ref{thm:BPPA-BQP}, replacing $\mathcal{V}_x^{r}$ with $\mathcal{D}_x^{r}$. First, partition the interval $[0,1]$ into $M=\mathcal{O}(\text{poly}(|x|))$ subintervals $[s_i,c_i]$, each of size $\Omega(\text{poly}(|x|)^{-1})$, corresponding to the different completeness and soundness parameters specified by the fixed-length prefix-free encoding $y_i$. Specifically, we consider the subintervals $[s_i,c_i]$ where $c_i=(M-i)/M$ and $s_i=c_i-1/4M$ for $1\leq i\leq M-1$. Then querying $A$ for each problem input $x_i$ returns an estimate
    \begin{equation}
        \hat{n}_i=N_{\geq c_i}+\delta_i+\varepsilon_i\,,
    \end{equation}
    with $\delta_i\in[0,N_{[s_i,c_i]}]$ and $|\varepsilon_i|\in[0,\epsilon 2^{w(|x_i|)}]$. Letting $\hat{n}_0=0$, we define the two quantities $\hat{\chi}_i=\hat{n}_i-\hat{n}_{i-1}$ and $\Delta_i=c_i+1/2M$. Then we have
    \begin{equation}
    \begin{aligned}
        \sum_{i=1}^M \Delta_i\hat{\chi}_i&=\sum_{i=1}^M \Delta_i\left(N_{[c_i,c_{i-1}]}+\delta_i-\delta_{i-1}+\varepsilon_i-\varepsilon_{i-1}\right)\,, \\
        &=\sum_{i=1}^M \Delta_iN_{[c_i,c_{i-1}]}+\sum_{i=1}^{M-1}(\delta_i+\varepsilon_i)+\underbrace{\frac{1}{2M}(\delta_M+\epsilon_M)}_{=0}\,.
    \end{aligned}
    \end{equation}
    where the last term vanishes as a result of the fact that for the interval $[s_M,c_M]$, we need not query the oracle and instead always return $2^{w(|x|)}$, meaning $\delta_M,\epsilon_M=0$.
    
    This quantity approximates the trace of the associated acceptance operator $\mathcal{D}_{x}^{r}$ of {\sf DQC}$_1$ in Definition~\ref{def:dqc1}. In particular, note that
    \begin{equation}
        |\Delta_i N_{[c_i,c_{i-1}]}-\text{Tr}_{[c_i,c_{i-1}]}(\mathcal{D}_{x}^{r})|\leq \frac{1}{2M}N_{[c_i,c_{i-1}]}\,,
    \end{equation}
    which then implies that 
    \begin{equation}
        \bigg|\sum_{i=1}^M\Delta_iN_{[c_i,c_{i-1}]}-\text{Tr}(\mathcal{D}_{x}^{r})\bigg|\leq \frac{2^{w(|x|)}}{2M}\,.
    \end{equation}

    Finally, we bound the error resulting from $\delta_i$ and $\varepsilon_i$ by noting that
    \begin{equation}
        \frac{1}{M}\sum_{i=1}^M \delta_i\leq \frac{2^{w(|x|)}}{M},\quad \text{and}\quad\frac{1}{M}\sum_{i=1}^M \varepsilon_i \leq \epsilon 2^{w(|x|)}\,,
    \end{equation}
    where the first inequality holds from the fact that $[s_i,c_i]\cap[s_j,c_j]=\emptyset$ for all $i\neq j$. In the second inequality we have used the fact that $2^{w(|x_i|)}=2^{w(|x|)}$ for all $i$, or rather, the size of the witness subspace for each input $x_i$ is the same size as that of $x$ even though the classical inputs may vary in size. Setting $\epsilon=1/M$, we add the errors from each contribution and find that
    \begin{equation}
        \bigg|\sum_{i=1}^M\Delta_i\hat{\chi}_i-\text{Tr}(\mathcal{D}_x^{r})\bigg|\leq \frac{2^{w(|x|)}}{M}+\frac{2^{w(|x|)}}{M}+\frac{2^{w(|x|)}}{2M}=\frac{5}{2}\frac{2^{w(|x|)}}{M}\,.
    \end{equation}
    Choosing $M>5/(c(|x|)-s(|x|))$ suffices to distinguish between the accepting and rejecting cases of Definition~\ref{def:dqc1}. Therefore, {\sf DQC}$_1$ $\subseteq$ {\sf BPP}$^A$.
\end{proof}

\section{Discussion}\label{sec:discussion}

We can now comment on the implications of our results in the context of previous work. First, recall that the quantum density of states problem is {\sf \#BQP}-complete~\cite{Brown_2011}, and it is well known that returning additive-error approximations to this problem, for a particular normalization $u=2^n$, is {\sf DQC$_1$}-complete \cite{brandao2008entanglement,cade2017quantum}. A natural question to ask is whether this result can be understood in terms of the more general framework we provide. Here we elucidate how the discrepancy in complexity arises between additive-error approximations to the quantum density of states problem and additive-error approximations to {\sf \#BQP} functions. The essential component of this difference, illustrated in our results, is determining how the size of the witness space is affected by the explicit reduction given in Ref.~\cite{Brown_2011}.

The {\sf \#BQP}-hardness reduction for the quantum density of states problem~\cite{Brown_2011} takes an input $x$ to a {\sf\#BQP} relation $r$ with an associated verifier circuit $V^{r,(|x|)}$, and employs the Feynman-Kitaev circuit-to-Hamiltonian mapping to construct a Hamiltonian whose low-energy subspace corresponds to the subspace of witnesses that accept with high probability. 

If $V^{r,(|x|)}$ has $w_r(|x|)$ witness qubits, $a_r(|x|)$ ancilla qubits, and $t_r(|x|)$ gates, then the smallest known Hamiltonian that encodes the accepting witnesses in its ground space is a Hamiltonian on $w_r(|x|)+a_r(|x|) +O(\log(t(|x|)))$ qubits. This construction has the drawback of logarithmic locality, which is necessary to achieve $O(\log(t(|x|)))$, rather than $t(|x|)$, scaling from the gate count. Since the {\sf DQC$_1$}-completeness of returning additive-error approximations requires a normalization $u=2^n$ for an $n$ qubit Hamiltonian, after this reduction the normalization achievable via quantum algorithms will increase from $u=2^{w(|x|)}$ to $u=2^{w(|x|)+a(|x|)+O(\log(t(|x|)))}$. If we also consider the case where $a(|x|) = O(\log(w(|x|)))$, then $2^{a(|x|)+O(\log(t(|x|)))} = O(\text{poly}(|x|))$, and we can compensate for this polynomially larger prefactor by choosing a polynomially smaller precision $\epsilon$. However, in the case where $a(|x|)$ scales superlogarithmically, the increase in the normalization will scale superpolynomially, and such a prefactor can no longer be compensated for by choosing a smaller precision $\epsilon$. This observation naturally distinguishes the two cases of {\sf \#BQP} relations considered in this work. The {\sf \#BQP} relations which require only logarithmically many ancilla qubits have reductions to and from the quantum density of states problem which preserve, up to a polynomially large prefactor, the normalization $u$ of the additive error achievable. Alternatively, for general {\sf \#BQP} relations the reductions highly impact the quality of the achievable normalization.            

We also note that the superpolynomial increase of the achievable normalization for general {\sf \#BQP} relations is also why the log-local versions of density of states and quantum partition function problem are {\sf DQC$_1$}-complete. As indicated in the preceding paragraph, a circuit-to-Hamiltonian mapping with constant locality requires a Hamiltonian which acts on at least $t(|x|)$ more qubits than the original {\sf DQC$_1$} circuit. This increases the normalization enough that the {\sf DQC$_1$} problem can no longer be decided.    

Another connection which can be made more straightforward as a result of our framework is the comparison between the various types of approximations to counting functions, and in particular, how they can lead to vastly different complexities. Recall that returning exact solutions to various counting problems yields little difference in complexity. In particular, it can be shown that both {\sf GapP} and {\sf \#BQP} are polynomial-time reducible to a \textit{single} oracle query for {\sf \#P}~\cite{counting-O,Brown_2011}.

However, when considering approximate solutions to various counting functions, this polynomial-time equivalence breaks down. In the case of the relative-error approximations, {\sf \#P} is approximated by a randomized algorithm that queries an {\sf NP} oracle, placing it in a low level of the polynomial hierarchy \cite{Stockmeyer1983TheCO}. On the other hand, relative-error approximations to {\sf GapP} remain {\sf \#P}-hard. The reason for this is that a relative-error approximation preserves the sign of a quantity, and this information can be used to approach the exact value via binary search. Finally, for {\sf \#BQP} the complexity of returning relative-error approximations is an open problem, with the complexity falling somewhere between {\sf QMA} and {\sf \#P} \cite{Bravyi_2022,bravyi2024kronecker}. 

For the additive-error approximations we consider here, the difference in complexity between {\sf \#P} and {\sf GapP} vanishes as the approximations no longer reliably identify the sign of the functions. An interesting property of {\sf \#BQP} is that it appears to be more easily approximated by quantum algorithms than by classical algorithms. This is a result of the fact that for exact solutions to counting functions, the polynomial-time equivalence between {\sf \#BQP}, {\sf \#P}, and {\sf GapP} highly affects the size of the witnesses for the respective problems. As discussed at the beginning of this section, this leads to vastly different normalizations of the efficiently achievable approximations.

\section{Conclusion}\label{sec:conclusion} 

In this work, we focus primarily on additive-error approximations to {\sf\#BQP} relations. While exact solutions to counting functions are theoretically interesting, they are generally difficult to compute, but approximate solutions to counting functions can often be efficiently obtained and are thus more useful for relevant applications. Here, we proved the existence of efficient quantum algorithms for returning additive-error approximations to {\sf \#BQP} relations with a normalization $u$ which scales exponentially in the number of witness qubits of the associated verifier circuit. We also demonstrated that returning such approximations is {\sf BQP}-hard, providing evidence that such approximations are not efficiently achievable classically. The bounds on the normalization we provide for efficient approximations are also shown to be, in a sense, optimal, as we proved that for any smaller normalization, returning additive-error approximations is {\sf \#BQP}-hard.

We also explored the connection between returning additive-error approximations to {\sf \#BQP} relations and the complexity class {\sf DQC$_1$}. Here we proved that returning additive-error approximations, with a normalization scaling exponentially in the size of the witness space, to {\sf \#BQP} relations requiring only logarithmically many ancilla qubits, is {\sf DQC$_1$}-complete.    

Finally, we discussed the implications of our results in the context of previously known results. Given that finding approximate solutions to various counting functions leads to vastly different complexities in the relative-error regime, and, moreover, that the complexity of such approximations to {\sf \#BQP} is still unclear, it motivates the study of other types of approximations to {\sf \#BQP}. In particular, we identify two regimes of {\sf \#BQP} problems which yield efficient additive-error solutions, and this division is solely dependent on the number of ancilla qubits, whether polynomial or logarithmic, needed for the verifier circuits.

\section*{Acknowledgements}
This material is based upon work supported by the U.S. Department of Energy, Office of Science, National Quantum Information Science Research Centers, Quantum Science Center (QSC) and Quantum Systems Accelerator (QSA). Mason L. Rhodes was a participant in the 2020 Quantum Computing Summer School at LANL, sponsored by the LANL Information Science \& Technology Institute. Samuel Slezak and Yigit Subasi were supported by the Laboratory Directed Research and Development program of Los Alamos National Laboratory (LANL) under project number 20230049DR. Samuel Slezak and Yigit Subasi were funded by the QSC and Mason L. Rhodes was funded by the QSA to perform the analytical calculations and to write the manuscript along with the other authors.

Sandia National Laboratories is a multimission laboratory managed and operated by National Technology and Engineering Solutions of Sandia, LLC., a wholly owned subsidiary of Honeywell International, Inc., for the U.S.\ Department of Energy's National Nuclear Security Administration under contract DE-NA-0003525.

This paper describes objective technical results and analysis. Any
subjective views or opinions that might be expressed in the paper do not
necessarily represent the views of the U.S.\ Department of Energy or the
United States Government.

\bibliographystyle{plain}
\bibliography{refs}

\section{Equivalence of {\sf \#BQP} definitions \label{app:sharp-bqp-equiv}}

Here we prove Lemma \ref{lem:equiv-sharp-QP-defs} demonstrating the equivalence of Definitions~\ref{def:sharp-BQP-new} and~\ref{def:sharp-BQP}. The original definition imposes that the completeness and soundness parameters can only depend on the size of the input, whereas the alternative definition allows the completeness and soundness to depend on the input itself. We demonstrate their equivalence in order to ensure that Definition~\ref{def:sharp-BQP-new} is not more powerful than the original one in Definition~\ref{def:sharp-BQP}.

\begin{restatable}{lem}{equivdefs}\label{lem:equiv-sharp-QP-defs}
    ${\sf \#BQP}_{(c^r,s^r)} = {\sf \#BQP}$.
\end{restatable}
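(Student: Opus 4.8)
The plan is to show the two inclusions separately. The direction ${\sf \#BQP} \subseteq {\sf \#BQP}_{(c^r,s^r)}$ is essentially trivial: any relation witnessed under Definition~\ref{def:sharp-BQP} uses functions $c^r, s^r$ depending only on $|x|$, which are a special case of the input-dependent efficiently-computable functions allowed in Definition~\ref{def:sharp-BQP-new}; one only needs to note that a function of $n = |x|$ that is $\Omega(\mathrm{poly}(n)^{-1})$-separated is in particular efficiently computable given $x$ and separated by $1/q^r(|x|)$ for a suitable polynomial $q^r$. So the work is all in the reverse inclusion.

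For ${\sf \#BQP}_{(c^r,s^r)} \subseteq {\sf \#BQP}$, suppose $r$ is witnessed by circuits $V^{r,(n)}$ with efficiently-computable input-dependent parameters $c^r(x), s^r(x)$ and polynomial gap $q^r$. The idea is to absorb the (classical, efficiently computable) data specifying the parameters into the verifier circuit itself, so that the new circuit's behavior depends only on $|x|$. Concretely, I would first pick a target global gap: since $c^r(x) - s^r(x) > 1/q^r(|x|)$, set a uniform margin like $g(n) = 1/(4 q^r(n))$ and a uniform reference level — the subtlety is that the individual values $c^r(x)$ can still vary with $x$ even at fixed $|x|$. The clean way around this is amplification: use the standard {\sf QMA}-style gap amplification (as cited via \cite{Marriott:2005a, Brown_2011}, which preserves the dimension of the witness subspace) to push $c^r(x)$ up to within, say, $2^{-p(n)}$ of $1$ and $s^r(x)$ down to within $2^{-p(n)}$ of $0$, for a polynomial $p$ to be chosen. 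After amplification, with new circuits $\tilde V^{r,(n)}$ and acceptance operators $\tilde{\mathcal V}^r_x$, we have $N_{\geq 1/2}(\tilde{\mathcal V}^r_x)$ sandwiched between $N_{\geq c^r(x)}(\mathcal V^r_x)$ and $N_{\geq s^r(x)}(\mathcal V^r_x)$ — because eigenvalues above $c^r(x)$ map above $1-2^{-p(n)} > 1/2$ and eigenvalues below $s^r(x)$ map below $2^{-p(n)} < 1/2$, while eigenvalues in the gap can go anywhere. Then using the fixed constants $\tilde c = 2/3$, $\tilde s = 1/3$ (or even $\tilde c, \tilde s$ arbitrarily close to $1/2$), the interval $[N_{\geq \tilde c}(\tilde{\mathcal V}^r_x) .. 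N_{\geq \tilde s}(\tilde{\mathcal V}^r_x)]$ is contained in $[N_{\geq c^r(x)}(\mathcal V^r_x) .. N_{\geq s^r(x)}(\mathcal V^r_x)] = r(x)$, so the amplified family witnesses $r$ under Definition~\ref{def:sharp-BQP} with constant parameters.

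The one genuine technical point — and the step I expect to be the main obstacle — is making the amplification uniform and input-independent in its structure: the number of repetitions needed to amplify from gap $1/q^r(n)$ to gap $1 - 2^{-p(n)}$ is a fixed polynomial in $n$ once $q^r$ and $p$ are fixed, and crucially the amplified circuit $\tilde V^{r,(n)}$ should still be a single polynomial-time-uniform family indexed by $n$ only, not by $x$. This works because $q^r$ and $p$ are polynomials (hence the repetition count depends only on $n$), and because the amplification procedure of \cite{Marriott:2005a} is a generic transformation applied to $V^{r,(n)}$ that does not need to know $c^r(x)$ or $s^r(x)$ — it just needs to know that a $1/q^r(n)$ gap exists. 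I would also need to confirm that the dimension-preserving property of the amplification (crucial here, since {\sf \#BQP} counts dimensions) holds; this is exactly the content of the construction cited from \cite{Brown_2011}, so I would invoke that rather than reprove it. Finally I would remark that the efficient computability of $c^r, s^r$ is used only to guarantee the gap $q^r$ can be taken to be a genuine polynomial known in advance, which is all the amplification needs.
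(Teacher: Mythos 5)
Your two-inclusion plan, with gap amplification used to trade the input-dependent thresholds for fixed constants, has the right shape, and your route (Marriott--Watrous/\cite{Brown_2011} dimension-preserving amplification plus a sandwich of counting intervals) genuinely differs from the paper's, which instead builds a singular-value threshold projector from an even polynomial approximation to the rectangle function via QSVT, with input-dependent angles $\Phi(x)$, exactly as in the proof of Theorem~\ref{thm:BPPA-DQC1}. However, there is a real gap at precisely the step you flag as the main obstacle. Your claim that the amplification ``does not need to know $c^r(x)$ or $s^r(x)$ --- it just needs to know that a $1/q^r(n)$ gap exists'' is false: every standard amplification (MW repeated-measurement counting, phase-estimation-based in-place amplification, or the paper's threshold projection) decides acceptance by comparing against a cutoff that must lie \emph{between} $s^r(x)$ and $c^r(x)$, so it needs the location of the gap, not just its width. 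Concretely, two inputs of the same length may have $(s^r,c^r)=(0.8,0.9)$ and $(0.1,0.2)$. Any fixed, $x$-independent map $p\mapsto f(p)$ applied to the eigenvalues of the acceptance operator would have to satisfy $f(0.5)<1/3$ for the first instance (since $0.5<s^r$) and $f(0.5)\geq 2/3$ for the second (since $0.5\geq c^r$), a contradiction; hence an $x$-independent amplification of $V^{r,(n)}$ cannot guarantee $[N_{\geq 2/3}(\tilde{\mathcal V}^r_x)..N_{\geq 1/3}(\tilde{\mathcal V}^r_x)]\subseteq r(x)$ for all $x$.

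The repair is exactly where the efficient computability of $c^r,s^r$ must be used --- your closing remark assigns it the wrong role, since the polynomial $q^r$ is already supplied by Definition~\ref{def:sharp-BQP-new}. Because $x$ sits in the input register in a computational basis state, the uniformly generated (in $n$) amplified verifier can reversibly compute $c^r(x),s^r(x)$, hence the cutoff $(c^r(x)+s^r(x))/2$, into ancillas and use it in its acceptance comparison; this is the analogue of the paper's choice of $x$-dependent angles $\Phi(x)$ implementing a threshold at $t=(c^r(x)+s^r(x))/2$ with resolution $\Delta=(c^r(x)-s^r(x))/2$. With that modification, and citing \cite{Marriott:2005a,Brown_2011} for the fact that the amplification acts eigenvector-wise and preserves the witness-subspace dimensions, your sandwich argument with constants $(2/3,1/3)$ goes through and gives ${\sf \#BQP}_{(c^r,s^r)}\subseteq{\sf \#BQP}$; the forward inclusion is trivial in both your write-up and the paper's.
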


\begin{proof}
    Let $c(x)$ and $s(x)$ be the completeness and soundness parameters associated to {\sf \#BQP}$_{(c^r,s^r)}$ and let $c'(|x|)$ and $s'(|x|)$ be the completeness and soundness parameters of {\sf \#BQP}. The inclusion {\sf \#BQP} $\subset$ {\sf \#BQP$_{(c^r,s^r)}$} follows immediately from the definitions, setting $c(x)=c'(|x|)$ and $s(x)=s'(|x|)$. For the inclusion {\sf \#BQP$_{(c^r,s^r)}$} $\subset$ {\sf \#BQP}, given the family of circuits $V^{r,(|x|)}$ and functions $c(x)$ and $s(x)$ for the relation $r$, we can construct a degree-$m$ even polynomial approximation $P$ to the rectangle function with $m=O(\log(1/\epsilon)/\delta)$~\cite{Gilyen:2019a} just as in the proof of Theorem~\ref{thm:BPPA-DQC1}. Specifically, $\epsilon$ determines how much the promise gap is amplified such that we can choose $\epsilon=\Omega(\text{exp}(|x|)^{-1})$ (to be specified later) and $\delta$ corresponds to the uncertainty around the threshold for implementing the threshold projectors, and thus must satisfy $\delta\leq (c(x)-s(x))/2=\Omega(\text{poly}(|x|)^{-1})$. Given such a polynomial, there exists a vector $\Phi(x)\in\mathbb{R}^m$ which applies the polynomial to the singular values of $\bra{1}V^{r,(|x|)}(\ket{x}\otimes|0^{a'}\rangle\otimes\mathbb{I}_W)$, resulting in a new family of circuits $V_{\Phi(x)}^{r,(|x|)}$ such that 
    $$
    \mathcal V^{r}_{\Phi(x)} = (\bra{x}\otimes\langle0^{a'}|\otimes\mathbb{I}_W)(V^{r,(|x|)}_{\Phi(x)})^\dagger \Pi_{\text{out}}V^{r,(|x|)}_{\Phi(x)}(\ket{x}\otimes|0^{a'}\rangle\otimes\mathbb{I}_W)\,,
    $$
    where we now choose $\epsilon$ such that all of the eigenstates of $\mathcal V^{r}_{x}$ with eigenvalues above $c(x)$ are now eigenstates of $\mathcal V^{r}_{\Phi(x)}$ with eigenvalues above ($1-\epsilon)^2>c'(|x|)$, and all of the eigenstates of $\mathcal V^{r}_{x}$ with eigenvalues below $s(x)$ are now eigenstates of $\mathcal V^{r}_{\Phi(x)}$ with eigenvalues below $\epsilon^2<s'(|x|)$.
\end{proof}

\section{{\sf BQP}-completeness of the Average Accept Probability of Verifier Circuit Problem}\label{app:avg-accp-verifer-bqp-complete}

Here we prove Lemma~\ref{lem:avg-accept-verifer-circ-BQP-comp}, showing that the Average Accept Probability of Verifier Circuits Problem is {\sf BQP}-complete. We point out that this problem is very similar to the definition (and thus what would be a defining problem for) {\sf DQC$_1$}. Indeed, if the circuits were generated on an input by input basis and the number of ancilla qubits were only allowed to scale as $O(\log(|x|))$, then this would be a trivially {\sf DQC$_1$} complete problem.

\begin{restatable}{lem}{avgaccept}\label{lem:avg-accept-verifer-circ-BQP-comp}
 The Average Accept Probability of Verifier Circuits problem in Definition~\ref{def:avg-accp-verifier-circ} is {\sf BQP}-complete.
\end{restatable}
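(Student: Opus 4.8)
The plan is to prove both containments: that the Average Accept Probability of Verifier Circuits problem lies in {\sf BQP}, and that every {\sf BQP} language reduces to it.

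For the {\sf BQP} upper bound, I would give a polynomial-time quantum algorithm that, on input $x$, estimates $\mathrm{Tr}[\mathcal V_x]/2^{w(|x|)}$. The key observation is that this normalized trace equals the acceptance probability of the circuit $V^{(|x|)}$ when the witness register is initialized in the maximally mixed state $\mathbb{I}_W/2^{w(|x|)}$, which is the same as the expected acceptance probability when the witness register is initialized to a uniformly random computational basis state $\ket{y}$. So the algorithm samples $y\in\{0,1\}^{w(|x|)}$ uniformly, runs $V^{(|x|)}$ on $\ket{x}\otimes\ket{0^a}\otimes\ket{y}$, measures the output qubit, and averages. Each run is an unbiased Bernoulli estimator of $\mathrm{Tr}[\mathcal V_x]/2^{w(|x|)}$, so by a Chernoff/Hoeffding bound, $O((c-s)^{-2}\log(1/\delta))$ repetitions suffice to distinguish the two cases with the required confidence; this is polynomial since $c-s = \Omega(\mathrm{poly}(n)^{-1})$. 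Hence the problem is in {\sf BQP}.

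For {\sf BQP}-hardness, I would take an arbitrary $A=(A_{\text{YES}},A_{\text{NO}})\in{\sf BQP}$ with circuit family $\{W^{(n)}\}$ acting on $n$ input qubits and $a(n)$ ancillas, and build a verifier circuit $V^{(n)}$ by simply adjoining a single witness qubit that the circuit ignores entirely: $V^{(n)} = W^{(n)}\otimes \mathbb{I}_1$, with $w(n)=1$. Then $\mathcal V_x = \mathcal W_x \otimes \mathbb{I}_1$ where $\mathcal W_x = \bra{x,0^a}W^{(|x|)\dagger}\Pi_{\text{out}}W^{(|x|)}\ket{x,0^a}$ is a scalar (the {\sf BQP} acceptance probability), so $\mathrm{Tr}[\mathcal V_x]/2^{w(|x|)} = \mathrm{Tr}[\mathbb{I}_1 \cdot \mathcal W_x]/2 = \mathcal W_x$. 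Thus the normalized-trace value is exactly the acceptance probability of the original {\sf BQP} circuit, and the completeness/soundness thresholds $c,s$ carry over unchanged, so a $\ge c$ vs. $\le s$ decision for the Average Accept Probability problem is exactly a $\ge c$ vs. $\le s$ decision for $A$. This reduction is trivially polynomial-time (it appends one idle qubit), which establishes {\sf BQP}-hardness and completes the proof.

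I do not anticipate a serious obstacle here; the result is essentially a bookkeeping exercise once one notes that the normalized trace of an acceptance operator is the same as an acceptance probability averaged over a maximally mixed witness register. The only point requiring mild care is verifying that the definition of the Average Accept Probability problem (Definition~\ref{def:avg-accp-verifier-circ}) permits $w(n)=1$ and that the promise-gap requirement $c-s=\Omega(\mathrm{poly}(n)^{-1})$ is met by the hardness construction — both of which follow immediately. One could alternatively phrase the hardness direction without the idle-qubit trick by invoking the standard fact that appending a trivial witness register to any {\sf BQP} verifier leaves the acceptance probability unchanged; either formulation works.
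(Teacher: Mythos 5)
Your proposal is correct and follows essentially the same route as the paper: membership is shown by the same algorithm of feeding uniformly random computational-basis witnesses through the verifier and estimating the normalized trace by sampling (the paper uses a Chebyshev-style bound with $M>3/\epsilon^2$, you invoke Chernoff/Hoeffding — immaterial), and hardness is the same trivial-witness reduction, with the only cosmetic difference that the paper takes a verifier with zero witness qubits (so $\mathcal V_x$ is a $1\times 1$ operator) while you append one idle witness qubit, giving $\mathrm{Tr}[\mathcal V_x]/2 = \mathcal W_x$. Both are valid and the identification of the normalized trace with the acceptance probability on a maximally mixed witness is exactly the observation the paper relies on.
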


\begin{proof}
    In order to demonstrate hardness, assume we are given an input $x$ to a problem in {\sf BQP} where we will take the completeness and soundness parameters to be $c(|x|)=2/3$ and $s(|x|)= 1/3$. We can then construct the corresponding circuit $V^{(|x|)}$ of the appropriate size and consider its acceptance operator $\mathcal V_x$ as in Eq.~\eqref{eq:gen-acceptance-op}. Note that this acceptance operator has zero witness qubits, and is thus a number (which can also be thought of as a one dimensional operator). We can then solve the average accept probability to determine whether:
    \begin{align}
        \frac{\text{Tr}[\mathcal V_x]}{2^0} = \bra{x, 0^a}V^{(|x|)\dagger}\Pi_{\textup{out}}V^{(|x|)}\ket{x, 0^a} 
    \end{align}
    is greater than $2/3$ or less than $1/3$, which clearly decides the {\sf BQP} problem.

    For the inclusion, we use a similar algorithm as in Theorem~\ref{thm:BPPA-BQP}. Given an input $x$, circuit $V^{(|x|)}$ acting on $w(|x|)$ witness qubits and $a(|x|)$ ancilla qubits, randomly generate $M>3/ \epsilon^2$ $w(|x|)$-bit strings and send them through the witness subsystem of the circuit, incrementing a counter when the circuit accepts to build the estimator
    \begin{align}
         X = \frac{1}{M} \sum_{i=1}^MX_i.
    \end{align}
    where $X_i$ is an indicator function for the $i^{\text{th}}$ run accepting. 
    With the stated $M$, this estimator will return a value $\tilde X$ that satisfies
    \begin{align}\label{eq:QMA-BQP-estimator-event}
        \left|\tilde X -\frac{\text{Tr}[\mathcal V_x]}{2^{w(|x|)}}\right| \leq \epsilon
    \end{align}
with probability larger than $2/3$. Choosing $\epsilon < \frac{1}{6}$ allows the decision problem to be determined whenever the event Eq.~\eqref{eq:QMA-BQP-estimator-event} is satisfied, and the algorithm will output yes if $\tilde X$ is closer to or greater than $c$, and will output no when $\tilde X$ is closer to or smaller than $s$. Thus it follows that if the instance is a yes instance this procedure will accept with probability greater than $2/3$, and will accept with probability less than $1/3$ for no instances. 
\end{proof}

\end{document}